\title{Variable Shift SDD: A More Succinct Sentential Decision Diagram} 
\titlerunning{Variable Shift SDD} 
\author{Kengo Nakamura}{NTT Communication Science Laboratories, Kyoto, Japan}{kengo.nakamura.dx@hco.ntt.co.jp}{}{}
\author{Shuhei Denzumi}{Graduate School of Information Science and Technology, The University of Tokyo, Tokyo, Japan}{denzumi@mist.i.u-tokyo.ac.jp}{}{}
\author{Masaaki Nishino}{NTT Communication Science Laboratories, Kyoto, Japan}{masaaki.nishino.uh@hco.ntt.co.jp}{}{}
\authorrunning{K. Nakamura, S. Denzumi, and M. Nishino} 
\keywords{Boolean function, Data structure, Sentential decision diagram} 
\DeclareMathOperator*{\argmin}{arg\,min}
\newcommand{\mathtrue}{\mathrm{\textit{true}}}
\newcommand{\mathfalse}{\mathrm{\textit{false}}}
\newcommand{\apos}{\mathbf{v}}
\newcommand{\aneg}{\neg\mathbf{v}}
\newcommand{\mathapply}{\mathtt{Apply}}
\newcommand{\vtoid}{\mathtt{ID}}
\newcommand{\idtov}{\mathtt{ID}^{-1}}
\newcommand{\order}[1]{O (#1)}
\newcommand{\morder}[1]{\Omega (#1)}
\newcommand{\myarrow}{\mathord{\uparrow}}
\renewcommand{\epsilon}{\varepsilon}
\begin{document}

\maketitle

\begin{abstract}
	The Sentential Decision Diagram (SDD) is a tractable representation of Boolean functions that subsumes the famous Ordered Binary Decision Diagram (OBDD) as a strict subset.
	SDDs are attracting much attention because they are more succinct than OBDDs, as well as having canonical forms and supporting many useful queries and transformations such as model counting and $\texttt{Apply}$ operation.
	In this paper, we propose a more succinct variant of SDD named \emph{Variable Shift SDD} (VS-SDD).
	The key idea is to create a unique representation for Boolean functions that are equivalent under a specific variable substitution.
	We show that VS-SDDs are never larger than SDDs and there are cases in which the size of a VS-SDD is exponentially smaller than that of an SDD. 
	Moreover, despite such succinctness, we show that numerous basic operations that are supported in polytime with SDD are also supported in polytime with VS-SDD.
	Experiments confirm that VS-SDDs are significantly more succinct than SDDs when applied to classical planning instances, where inherent symmetry exists. 
\end{abstract}

\section{Introduction}
\label{sec:intro}

The succinct representations of a Boolean function have long been studied in the computer science community.
Among them, the \emph{Ordered Binary Decision Diagram} (OBDD)~\cite{bryant86bdd} has been used as a prominent tool in various applications.
An OBDD represents a Boolean function as a directed acyclic graph (DAG).
The reason for the popularity of OBDDs is that it can often represent a Boolean function very succinctly while supporting many useful queries and transformations in polytime with respect to the compilation size.

In the last few years, the \emph{Sentential Decision Diagram} (SDD)~\cite{darwiche11sdd}, which is also a DAG representation, has also attracted attention~\cite{vlasse14LP,oztok15topdown}.
SDDs have a tighter bound on the compilation size than OBDDs~\cite{darwiche11sdd}, and there are cases in which the use of SDDs can make the size exponentially smaller than OBDDs~\cite{bova16exp}.
In addition, SDDs also support a number of queries and transformations in polytime.
Among them, the most important polytime operation is the Apply operation, which takes two SDDs representing two Boolean functions $f, g$ and binary operator $\circ$, such as conjunction ($\wedge$) and disjunction ($\vee$), and returns the SDD representing the Boolean function $f\circ g$.
This operation is fundamental in computing an arbitrary Boolean function into an SDD, as well as in proving the polytime solvability of various important and useful operations.

One of the reasons why OBDDs and SDDs, as well as many other such DAG representations,
can express a Boolean function succinctly is that they share identical substructures that represent the equivalent
Boolean function; they represent a Boolean function by recursively
decomposing it into subfunctions that can also be represented as DAGs. If
a decomposition generates equivalent subfunctions, we do not
need to have multiple DAGs, and thus more succinctly represent the original Boolean
function. Since the effectiveness of such representations depend on the DAG size, 
representations that are more succinct while still supporting useful operations are always in demand.

In this paper, we propose a new SDD-based structure named \emph{Variable Shift SDD} (VS-SDD); it can even more succinctly represent Boolean functions, while supporting polytime $\mathapply$ operations.
The key idea is to extend the condition for sharing
DAGs. While an SDD can share DAGs representing identical Boolean functions, a VS-SDD can  share DAGs representing
Boolean functions that are equivalent under a specific variable
substitution.
For example, consider two Boolean functions $f = A \wedge B$ and $g = C \wedge
D$ defined over  variables $A, B, C, D$. An SDD cannot share
DAGs representing $f$ and $g$ since they are  not equivalent. On the other hand, VS-SDD can share them since
$f$ and $g$ are equivalent under the variable substitution that exchanges $A$ with $C$ and
$B$ with $D$. Such Boolean functions  appear
in a wide range of situations. One typical example is modeling
time-evolving systems; such as, we want to find a sequence of assignments of variables
$\mathbf{x}_1, \ldots, \mathbf{x}_T$ over timestamps $t = 1, \ldots, T$
such that every $\mathbf{x}_t$ satisfies the condition that $h(\mathbf{x}_t) =
\mathtrue$. Such a sequence is modeled as Boolean function $f(\mathbf{X}_1, \ldots, \mathbf{X}_T) =
h^{(1)}(\mathbf{X}_1) \wedge \cdots \wedge h^{(T)}(\mathbf{X}_T)$, where $h^{(t)}$
is $h(\mathbf{X})$ defined over $\mathbf{X}_t$.
Since all $h^{(t)}(\mathbf{X}_t)$ are equivalent under variable substitutions, it is highly
possible that VS-SDD can yield more succinct representations.

Technically, these advantages of VS-SDD are obtained by introducing the indirect
specification of depending variables. Every SDD is associated with a set of variables that the corresponding Boolean function depends on.
In SDD, such set of variables are represented by IDs, where each set of variables has a unique ID.
On the other hand, VS-SDD represents such sets of variables by storing the \emph{difference} of IDs.
This allows the sharing of the Boolean functions that are equivalent under specific types of variable substitutions. 

Our main results are as follows:
\begin{itemize}
	\item VS-SDDs are never larger their SDDs equivalents. Moreover, there is a
              class of Boolean functions for which VS-SDDs are exponentially smaller than SDDs.
	\item  VS-SDD supports polytime $\mathapply$. Moreover, the queries
               and transformations listed in \cite{darwiche02KC} that SDDs
               support in polytime are also supported in polytime by
               VS-SDDs.
        \item  We experimentally confirm that when applied to classical planning instances, VS-SDDs are significantly smaller than SDDs.
\end{itemize}
To summarize, VS-SDDs incur no additional overhead over SDDs while being potentially much smaller than SDDs.

The rest of this paper is organized as follows.
Sect.~\ref{sec:related} reviews related works.
Sect.~\ref{sec:preliminary} gives the preliminaries.
Sect.~\ref{sec:SDD} introduces SDD, on which our proposed structure is based.
Sect.~\ref{sec:VSSDD} describes the formal definition of the equivalence relation we want to share, the definition of VS-SDD, and the relation between them.
Sect.~\ref{sec:property} examines the properties of VS-SDDs.
Sect.~\ref{sec:operation} deals with the operations on VS-SDDs, especially $\mathapply$.
Sect.~\ref{sec:implement} mentions some implementation details that ensure that VS-SDDs suffer no overhead penalty relative to SDDs.
Sect.~\ref{sec:eval} provides experiments and their results, and
Sect.~\ref{sec:conclusion} gives concluding remarks.

\section{Related Works}
\label{sec:related}
There have been studies that attempt to share the substructures that represent the ``equivalent'' Boolean functions up to a conversion.
For OBDDs, the most famous among them are complement edges and attributed edges~\cite{madre88ce,minato90vs}.
For example, with complement edges, we can share the substructures representing the equivalent Boolean functions up to taking a negation.
However, this study does not focus on the solvability of the operations in the compressed form. Actually, some $\mathapply$ operations cannot be performed 
in a compressed form.
After that, the differential BDD~\cite{anuchi95diff}, especially $\myarrow\Delta$BDD, was proposed to share equivalent Boolean functions up to the shifting of variables, that is, given the total order of the variables, shift them uniformly to share isomorphic substructures.
This structure supports operations like $\mathapply$, but its complexity depends on the number of variables, which means that this operation is not supported in polytime of the compilation size.
With regard to other representations, Sym-DDG/FBDD~\cite{bart14sym}, based on DDG~\cite{fragier06DDG} and FBDD~\cite{gergov94FBDD}, can share equivalent functions up to variable substitution.
Since their method adopts a permutation of variables, it can, in principle, treat any variable substitution.
However, Sym-DDG/FBDD fails to support some important operations such as conditioning and $\mathapply$.
With regard to these previous works, VS-SDD differs in three points.
First, to the best of our knowledge, VS-SDD is the first attempt to extend the equivalence relationships of an SDD.
We should note that VS-SDD is not obtained by a straightforward application of the techniques invented for OBDDs.
Second, VS-SDD has theoretical guarantees on its size.
Last, it supports the flexible polytime $\mathapply$ operation.

\section{Preliminaries}
\label{sec:preliminary}
We use an uppercase letter (e.g., $X$) to represent a variable and a lowercase letter (e.g., $x$) to denote its assignment (either $\mathtrue$ or $\mathfalse$).
A bold uppercase letter (e.g., $\mathbf{X}$) represents a set of variables and a bold lowercase letter (e.g., $\mathbf{x}$) denotes its assignment. 
\emph{Boolean function} $f(\mathbf{X})$ is a function that maps each assignment of $\mathbf{X}$ to either $\mathtrue$ or $\mathfalse$. The \emph{conditioning} of $f$ on instantiation $\mathbf{X}$, written $f | \mathbf{x}$, is the subfunction that results from setting variables $\mathbf{X}$ to their values in $\mathbf{x}$. We say $f$ \emph{essentially depends} on variable $X$ iff $f | X \neq f | \neg X$.
We take $f(\mathbf{Z})$ to mean that $f$ can only essentially depend on variables in $\mathbf{Z}$.
A \emph{trivial} function maps all its inputs to 0 (denoted \emph{false}) or maps all to 1 (denoted \emph{true}).

Consider an ordered full binary tree.
For two nodes $u,w$ in it, we say $w$ is a \emph{left descendant} (resp.~\emph{right descendant}) of $u$ if $w$ is a (not necessarily proper) descendant of the left (resp.~right) child of $u$.

\section{Sentential Decision Diagrams}
\label{sec:SDD}
First, we introduce SDD.
It is a data structure that can represent a Boolean function as a directed acyclic graph (DAG) like OBDD.

Let $f$ be a Boolean function and $\mathbf{X}, \mathbf{Y}$ be non-intersecting sets of  variables.
The $(\mathbf{X}, \mathbf{Y})$\emph{-decomposition} of $f$ is 
$f=\bigvee_{i=1}^{n}[p_i(\mathbf{X})\wedge s_i(\mathbf{Y})]$,
where $p_i(\mathbf{X})$ and $s_i(\mathbf{Y})$ are Boolean functions.
Here $p_1,\ldots,p_n$ are called \emph{primes} and $s_1,\ldots,s_n$ are called \emph{subs}.
We denote $(\mathbf{X}, \mathbf{Y})$-decomposition as $\{(p_1, s_1), \ldots, (p_n, s_n)\}$, where $n$ is the size of the decomposition. The pair $(p_i, s_i)$ is called an \emph{element}.
An $(\mathbf{X},\mathbf{Y})$-decomposition is called $\mathbf{X}$\emph{-partition} iff $p_i\wedge p_j=\mathfalse$ for all $i\neq j$, $\bigvee_{i=1}^{n}p_i=\mathtrue$, and $p_i\neq\mathfalse$ for all $i$.
If $s_i\neq s_j$ for all $i\neq j$, the partition is called \emph{compressed}.
It is known that a function $f(\mathbf{X}, \mathbf{Y})$ has exactly one compressed $\mathbf{X}$-partition (see Theorem 3 of \cite{darwiche11sdd}).

An SDD decomposes a Boolean function by recursively applying $\mathbf{X}$-partitions.
The structure of partitions is determined by an ordered full binary tree called the \emph{vtree}; its leaves have a one-to-one correspondence with variables.
Here, each internal node partitions the variables into those in the left subtree ($\mathbf{X}$) and those in the right subtree ($\mathbf{Y}$).
For example, the vtree in Fig.~\ref{fig:sdds}(a) shows the recursive partition of
variables $A, B, C, D$. The root node represents the partition of variables to
$\{A, B\}, \{C, D\}$, while the left child of the root represents the partition
$\{A\}, \{B\}$. SDD implements $\mathbf{X}$-partitions by following these recursive partitions of variables.

\begin{figure}[tbp]
	\centering
	\includegraphics[keepaspectratio,scale=0.75]{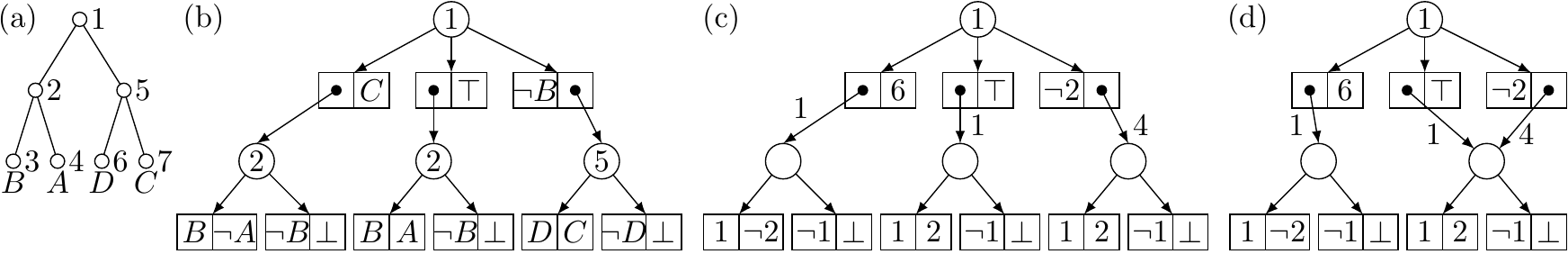}
	\caption{(a) An example of a vtree. (b) The SDD of $f=(A\wedge B)\vee(B\wedge C)\vee(C\wedge D)$ that respects the vtree of (a). (c)(d) The VS-SDD of $f=(A\wedge B)\vee(B\wedge C)\vee(C\wedge D)$ given the vtree (a) with offset $1$. Here (d) is the more reduced form than (c).}
	\label{fig:sdds}
\end{figure}

Let $\langle\cdot\rangle$ be a mapping from an SDD to a Boolean function (i.e., the semantics of SDD).
The SDD is defined recursively as follows.
\begin{definition}
	\label{def:SDD}
	The following $\alpha$ is an SDD that respects vtree node $v$.
	\begin{itemize}
		\item \emph{(constant)} $\alpha=\top$ or $\alpha=\bot$.
		Semantics: $\langle\top\rangle =\mathtrue$ and $\langle\bot\rangle =\mathfalse$.
		\item \emph{(literal)} $\alpha=X$ or $\alpha=\neg X$, and $v$ is a leaf  node with variable $X$.
		Semantics: $\langle X\rangle = X$ and $\langle\neg X\rangle = \neg X$.
		\item \emph{(decomposition)} $\alpha=\{(p_1,s_1),\ldots,(p_n,s_n)\}$, and $v$ is an internal node.
		Here each $p_i$ is an SDD respecting a left descendant node of $v$, each $s_i$ is an SDD respecting a right descendant node of $v$, and $\langle p_1\rangle,\ldots,\langle p_n\rangle$ form a partition.
		Semantics: $\langle\alpha\rangle = \bigvee_{i=1}^{n}[\langle p_i\rangle\wedge\langle s_i\rangle]$.
	\end{itemize}
	The size of $\alpha$ (denoted by $|\alpha|$) is defined as the sum of the sizes of all its decompositions.
\end{definition}

Given the vtree of Fig.~\ref{fig:sdds}(a), Fig.~\ref{fig:sdds}(b) depicts an SDD that respects the vtree node labeled 1 and represents $f=(A\wedge B)\vee(B\wedge C)\vee(C\wedge D)$.
At the top level, $f$ is decomposed as $[(\neg A\wedge B)\wedge C]\vee [(A\wedge B)]\vee [\neg B\wedge(C\wedge D)]$. This is the compressed $\{A, B\}$-partition since primes $\neg A\wedge B$, $A \wedge B$ and $\neg B$ satisfy the condition for $\{A, B\}$-partition and subs are all different.
Here each circle represents a decomposition node, and the number inside each circle indicates the respecting vtree node ID. The size of the SDD is  9.

There are two classes of canonical SDDs.
We say a class of SDDs is canonical iff, given a vtree, for any Boolean function $f$, there is exactly one SDD in this class that represents $f$.
Here we consider only \emph{reduced} SDDs, i.e.~the SDDs such that the identical substructures are fully merged.
\begin{definition}
	\label{def:SDDcharacter}
	We say SDD $\alpha$ is \emph{compressed} iff all partitions in $\alpha$ are compressed.
	We say $\alpha$ is \emph{trimmed} iff it does not have decompositions of the form $\{(\top,\beta)\}$ and $\{(\beta,\top),(\neg\beta,\bot)\}$, and \emph{lightly trimmed} iff it does not have decompositions of the form $\{(\top,\top)\}$ and $\{(\top,\bot)\}$.
	We say $\alpha$ is \emph{normalized} iff for each decomposition that respects vtree node $w$, its primes respect the left child of $w$ and its subs respect the right child of $w$.
\end{definition}
\begin{theorem}[\cite{darwiche11sdd}]
	\label{thm:SDDcanonical}
	Compressed and trimmed SDDs are canonical.
	Also, compressed, lightly trimmed, and normalized SDDs are canonical.
\end{theorem}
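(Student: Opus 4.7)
The plan is to prove both canonicity claims by structural induction on the vtree, driven uniformly by the uniqueness of the compressed $\mathbf{X}$-partition (Theorem 3 of \cite{darwiche11sdd}). I would handle the normalized case first, where structural rigidity keeps each prime and sub at the immediate children of its decomposition node, and then adapt the argument to the trimmed case.

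For the compressed, lightly trimmed, normalized class, the base case is a leaf vtree node, where the four SDDs $\top,\bot,X,\neg X$ represent four distinct Boolean functions. At an internal vtree node $v$ with children $v_l,v_r$, the trivial functions $\mathtrue$ and $\mathfalse$ admit only the constants $\top$ and $\bot$ as representations: lightly trimmed rules out $\{(\top,\top)\}$ and $\{(\top,\bot)\}$, which would be the sole alternative encodings. A non-trivial $f$ must therefore be a decomposition whose elements, by normalization together with compressedness, form the compressed $(\mathbf{X}_{v_l},\mathbf{X}_{v_r})$-partition of $f$. That partition is unique as a set of prime-sub pairs viewed as Boolean functions, and the inductive hypothesis then promotes each prime and sub to a unique SDD at $v_l$ and $v_r$ respectively. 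Existence mirrors this recursion.

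For the compressed, trimmed class, the extra subtlety is that primes and subs in a decomposition at $v$ may respect arbitrary left and right descendants of $v$, so I first need to pin down the ``home'' vtree node of each subfunction. The key intermediate claim is: a compressed, trimmed SDD $\alpha$ respecting an internal vtree node $v$ and representing non-trivial $f$ is an actual decomposition at $v$ exactly when $f$ essentially depends on at least one variable in $\mathbf{X}_{v_l}$ and at least one in $\mathbf{X}_{v_r}$; otherwise $\alpha$ is in fact a trimmed SDD at a proper descendant of $v$. To prove this, I would show that any compressed $(\mathbf{X}_{v_l},\mathbf{X}_{v_r})$-partition of an $f$ essentially depending only on $\mathbf{X}_{v_l}$ (or only on $\mathbf{X}_{v_r}$) must take one of the forms $\{(\beta,\top),(\neg\beta,\bot)\}$ or $\{(\top,\beta)\}$, both explicitly forbidden by trimming. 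With the claim in hand, uniqueness follows by induction using compressed-partition uniqueness on the primes (and the induced subs), while existence is obtained by constructing the SDD recursively at the smallest vtree node whose subtree covers the essential variables of $f$.

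The main obstacle is this intermediate claim in the trimmed case: one must rigorously argue that the two trimmed-forbidden shapes exhaust all compressed $(\mathbf{X}_{v_l},\mathbf{X}_{v_r})$-partitions of a one-sided $f$, so that trimming truly pushes such an $f$ down to a proper descendant of $v$ rather than leaving room for a subtler non-degenerate decomposition. This linking of essential dependence to the unique compressed partition structure is the technical crux on which the whole induction rests.
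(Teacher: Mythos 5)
Your proof is correct and follows essentially the same route as the standard argument: the paper only cites \cite{darwiche11sdd} for this theorem, and its own proof of the VS-SDD analogue (Theorem~\ref{thm:can-vs}, together with Lemma~\ref{lem:essdep}) uses exactly your structure---induction on the vtree driven by the uniqueness of the compressed $\mathbf{X}$-partition, with your intermediate claim for the trimmed case playing the role of the ``essentially depends on exactly one vtree node'' lemma. The crux you flag resolves immediately from compressed-partition uniqueness, since $\{(f,\top),(\neg f,\bot)\}$ and $\{(\top,f)\}$ are themselves compressed partitions of the respective one-sided functions and hence the only ones.
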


The key property of SDDs is that they support the polytime $\mathapply$ operation, which takes, given vtree $v$, two SDDs $\alpha, \beta$ and binary operation $\circ$, and computes a new SDD that represents $\langle\alpha\rangle\circ\langle\beta\rangle$ in $\order{|\alpha||\beta|}$ time.
Using $\mathapply$, we can compile an arbitrary Boolean function into an SDD.

\section{Variable Shift Sentential Decision Diagrams}
\label{sec:VSSDD}
We now introduce our more succinct variant of SDD, named \emph{variable shift SDD} (VS-SDD).
As mentioned above, an SDD expresses a Boolean function succinctly by sharing equivalent substructures that represent the same Boolean subfunction.
The motivation to introduce VS-SDD is, in addition to this, to share substructures that represent equivalent Boolean functions under a particular variable substitution.

First, we briefly describe the idea by using an intuitive example.
Let us consider two Boolean functions $f  = X_1 \wedge X_2$ and $g  = X_3 \wedge X_4$ defined over variables $X_1, \ldots, X_4$. 
Apparently, $f$ and $g$ are not equivalent, but they are equivalent  if we exchange $X_1$ with $X_3$ and $X_2$ with $X_4$. 
We formally define this equivalency of Boolean functions below.
\begin{definition}
    \label{def:subst-equiv}
    We say two Boolean functions $f$, $g$ defined over $\mathbf{X}$ are     \emph{substitution-equivalent} with permutation $\pi$ if 
    $f(X_1 = x_1, \ldots, X_M = x_M) = g(X_1 = x_{\pi(1)}, \ldots, X_M = x_{\pi(M)})$ for any assignment $\mathbf{x}$, where $M = |\mathbf{X}|$  and $\pi : \{1, \ldots, M\} \mapsto \{1, \ldots, M\}$ is a bijection.
\end{definition}

In the above example, $f$ and $g$ are substitution equivalent with $\pi$ satisfying $\pi(3) = 1$ and $\pi(4) = 2$. 
For $i=1,2$, this permutation is defined by simply adding constant to an input, i.e.,  $\pi(i) = i+c\ (i=1,2)$ where the constant $c = 2$.
This result implies that a class
of substitution-equivalent functions can be represented
as the pair of a base representation and constant value $c$.
VS-SDD exploits this idea.

\subsection{Definition of the structure}
Now we consider the structure and semantics of VS-SDD.
VS-SDD shares many properties with SDDs; it is defined with  a vtree and a DAG structure representing recursive $\mathbf{X}$-partitions following
the vtree. VS-SDD has two main differences from SDD.
First, it associates every vtree node with an integer ID and it considers some mathematical operations over them. 
We use $\vtoid(v)$ to represent the ID associated with vtree node $v$, and $\idtov(i)$ to represent the vtree node that corresponds to ID $i$. In the following, we  assume that integer IDs of vtree nodes are assigned following a preorder traversal of the vtree.
The IDs assigned to the vtree in Fig.~\ref{fig:sdds}(a) satisfy this condition.
Second, while SDD represents a Boolean function as a node of a DAG, VS-SDD represents a Boolean function as a pair $(\alpha, k)$ of node $\alpha$ in a DAG  and integer $k$. We say $\alpha$ is the VS-SDD \emph{structure} and $k$ is its \emph{offset}. We use $\langle \alpha, k \rangle$ as a mapping from VS-SDD $(\alpha, k)$  to the corresponding Boolean function. 
\begin{definition}
	Given vtree $v$, the following $(\alpha, k)$ is a VS-SDD.
	\begin{itemize}
		\item \emph{(constant)} $\alpha=\top$ or $\alpha=\bot$.
		Semantics: $\langle\top,\cdot\rangle=\mathtrue$ and $\langle\bot,\cdot\rangle=\mathfalse$.
		\item \emph{(literal)} $\alpha=\apos$ or $\alpha=\aneg$, and $\idtov(k)$ is a leaf vtree node.
		Semantics: $\langle\apos,k\rangle=l(\idtov(k))$ and $\langle\aneg,k\rangle=\neg l(\idtov(k))$, where $l(v)$ is a variable corresponding to vtree node $v$.
		\item \emph{(decomposition)} $\alpha=\{([p_1, d_1], [s_1, e_1]),\ldots,([p_n, d_n],[s_n, e_n])\}$, and $\idtov(k)$ is an internal node of $v$.
		Here each $p_i$ is a VS-SDD structure and $d_i$ is an integer such that $\idtov(d_i+k)$ is a left descendant vtree node of $\idtov(k)$. Similarly, each $s_i$ is a VS-SDD structure and $e_i$ is integer such that $\idtov(e_i+k)$ is a right descendant node of $\idtov(k)$ and Boolean functions $\langle p_1  , d_1 + k\rangle,\ldots,\langle p_n, d_n + k \rangle$ form a partition.
		Semantics: $\langle\alpha,k\rangle = \bigvee_{i=1}^{n}(\langle p_i,d_i+k\rangle\wedge\langle s_i,e_i+k\rangle)$.
	\end{itemize}
	The size of $\alpha$ (denoted by $|\alpha|$) is defined as the sum of the sizes of all decompositions.
\end{definition}

Given the vtree of Fig.~\ref{fig:sdds}(a), Fig.~\ref{fig:sdds}(c)-(d) depict the VS-SDDs representing $f=(A\wedge B)\vee(B\wedge C)\vee(C\wedge D)$, where Fig.~\ref{fig:sdds}(d) is a further reduced form created by sharing the identical substructures in Fig.~\ref{fig:sdds}(c). Here the offset is written in the circle of the root node.
Every prime $[p_i, d_i]$ is drawn as an arrow to structure $p_i$ annotated with $d_i$, except for the following cases.
If $p_i = \apos$ (resp.~$\aneg$), it is drawn as simply $d_i$ (resp.~$\neg d_i$).
If $p_i$ is either of $\top$ or $\bot$, it is represented by $p_i$ itself, since the value of $d_i$ has no effect on the semantics.
Subs $[s_i, e_i]$ are treated in the same way.

We first give an interpretation of VS-SDD. 
By comparing the SDD in Fig.~\ref{fig:sdds}(b) with the VS-SDD in Fig.~\ref{fig:sdds}(c) having the same structure, we find they differ only in the labels
of nodes and edges. Actually, we can construct the SDD of Fig.~\ref{fig:sdds}(b) from the VS-SDD in Fig.~\ref{fig:sdds}(c) in the following way. Let $P_\alpha$ be a path from the root to VS-SDD structure $\alpha$ and
$D_{P_\alpha}$ be the sum of the offset and edge values appearing along the path. Then, $\idtov(D_{P_\alpha})$ is the vtree node that the corresponding SDD node respects. For example, the leftmost child of the root node in the VS-SDD in Fig.~\ref{fig:sdds}(c) has offset value $6$. The sum of offset values
for this node is $1 + 6 = 7$ and $\idtov(D_{P_\alpha})$ corresponds to the leaf vtree node having variable $C$.
In this way, VS-SDD can be seen as an SDD variant that employs an indirect way of representing the respecting vtree nodes.


\subsection{Substitution-equivalency in VS-SDDs}
\label{sec:subeq}
Next we show how substitution-equivalent functions are shared in VS-SDD.
In Fig.~\ref{fig:sdds}(d), we should observe that the bottom-right node (say $\beta$) represents two substitution-equivalent functions $A\wedge B$ and $C\wedge D$. There are two different paths (say $P_1$ and $P_2$) from the root to $\beta$, and
they correspond to different offset values $D_{P_1} = 1 + 1 = 2$ and $D_{P_2} = 1 + 4 = 5$. Therefore, $\beta$ is used in two VS-SDDs $(\beta, 2)$ and $(\beta, 5)$ and 
they correspond to $A\wedge B$ and $C\wedge D$, respectively. 
In this way, substitution-equivalent functions are represented by VS-SDDs with the same structure and different offsets.

Now we proceed to the formal description.
Let $u, w$ be isomorphic subtrees of vtree $v$, $\mathbf{X}$ be the set of variables corresponding to the leaves of $v$, and $M$ be the number of variables.
We consider permutation $\pi_{u, w} : \{1, \ldots, M\} \mapsto \{1, \ldots, M\}$
that preserves the relation between $u$ and $w$. That is, 
let $X_i$ and $X_j$ be the variables associated with leaf nodes $u^\prime$ in $u$ and $w^\prime$ in $w$, respectively. We assume that $u^\prime$ and $w^\prime$ are associated through the graph isomorphism between $u$ and $w$. Then $\pi_{u, w}$ is the bijection satisfying $\pi_{u, w}(j) = i$ for every pair of $X_i$ and $X_j$ corresponding to the leaf nodes of $u$ and $w$.
If $u$ and $w$ are isomorphic and we employ preorder IDs, then the difference in IDs of corresponding nodes of $u$ and $w$ is unique. We call this the \emph{shift} between $u, w$ and denote it as $\delta$. For example, 
in the vtree in Fig.~\ref{fig:sdds}(a), two child nodes of the root node represent
isomorphic vtrees. In these vtrees $\delta = 3$ for every corresponding node pair.

\begin{theorem}
    Let $f, g$ be Boolean functions that essentially depend on isomorphic vtrees $u$ and $w$ (resp.), where $u$ and $w$ are nodes in the entire vtree $v$. 
    If $f$ and $g$ are
    substitution-equivalent with $\pi_{u, w}$ then the compressed and trimmed VS-SDDs $(\alpha, k)$ and $(\beta, \ell)$ representing $f$ and $g$ satisfies $\alpha = \beta$ and $\ell = k + \delta$.
\end{theorem}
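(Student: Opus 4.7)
My plan is to argue by structural induction on the shared vtree node $u$ (equivalently, on the sizes of the VS-SDDs $\alpha$ and $\beta$), relying on canonicity of compressed trimmed VS-SDDs together with uniqueness of the compressed $\mathbf{X}$-partition of a Boolean function. Throughout, I exploit that under preorder IDs the shift between corresponding nodes of the isomorphic subtrees $u$ and $w$ is constantly $\delta$; in particular, if $u_i$ is a descendant of $u$ and $w_i$ the corresponding descendant of $w$, then $\vtoid(w_i)-\vtoid(u_i)=\delta$.

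The base case is when $u$ (and hence $w$) is a leaf. Then $f$ and $g$ are literals, and substitution-equivalence via $\pi_{u,w}$ forces them to share polarity, so the trimmed VS-SDD structures are both $\apos$ or both $\aneg$, giving $\alpha=\beta$. The offsets are $k=\vtoid(u)$ and $\ell=\vtoid(w)$, so $\ell-k=\delta$. For the inductive step, let $\{(p_i,s_i)\}_i$ be the compressed $\mathbf{X}_L$-partition of $f$, where $\mathbf{X}_L$ is the set of variables at the leaves of the left subtree of $u$. Since substitution-equivalence commutes with conjunction and disjunction, applying $\pi_{u,w}$ produces a compressed $\mathbf{X}'_L$-partition of $g$ (with $\mathbf{X}'_L$ defined analogously for $w$) that, by uniqueness, must coincide elementwise with $g$'s compressed partition. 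For each element, the prime $p_i$ respects some left descendant $u_i$ of $u$ and $\pi_{u,w}(p_i)$ respects the corresponding $w_i$ in $w$; the restriction of $\pi_{u,w}$ to the variables of $u_i$ equals $\pi_{u_i,w_i}$, so the induction hypothesis yields identical VS-SDD structures with offsets shifted by $\delta$ (and the same for subs). The recorded edge values then match, since $d_i=\vtoid(u_i)-k=\vtoid(w_i)-(k+\delta)=\vtoid(w_i)-\ell$, and analogously for $e_i$. Hence the decomposition lists of $\alpha$ and $\beta$ agree element by element, giving $\alpha=\beta$ and $\ell=k+\delta$.

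The main obstacle I anticipate is the partition-matching step: verifying that $\pi_{u,w}$ transports the unique compressed partition of $f$ into a partition of $g$ that is again compressed (primes pairwise inconsistent, covering, and non-$\mathfalse$; subs distinct), and that the resulting collection agrees elementwise with $g$'s unique compressed partition. This reduces to checking that substitution-equivalence is preserved under the Boolean connectives and under the essential-dependence restriction, and carefully indexing the elements so that the induction hypothesis can be applied to each prime-sub pair separately. Once this bookkeeping is in place, the arithmetic on offsets follows cleanly from the preorder-ID property of $\delta$, and canonicity of compressed trimmed VS-SDDs closes the argument.
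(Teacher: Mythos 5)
Your proof is correct, but it takes a noticeably different route from the paper's. The paper disposes of this theorem in two lines: it observes that re-reading the \emph{same} structure $\alpha$ with offset $k+\delta$ replaces every variable $l(\idtov(i))$ occurring in $\langle\alpha,k\rangle$ by $l(\idtov(i+\delta))$ — which, because $u$ and $w$ are isomorphic and IDs are assigned in preorder, is exactly the substitution $\pi_{u,w}$ — so $\langle\alpha,k+\delta\rangle=g$; canonicity of compressed and trimmed VS-SDDs (the paper's Theorem~\ref{thm:can-vs}) then forces $(\beta,\ell)=(\alpha,k+\delta)$. You instead run a structural induction on $u$ that in effect inlines and re-proves that canonicity argument: you push $\pi_{u,w}$ through the unique compressed partition, match elements, restrict the permutation to corresponding sub-subtrees, and verify the stored edge differences $d_i=\vtoid(u_i)-k=\vtoid(w_i)-\ell$ agree. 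All the ingredients you invoke are sound (substitution commutes with the connectives and with compression of partitions, $\pi_{u,w}$ restricted to $u_i$ is $\pi_{u_i,w_i}$, and the shift between corresponding preorder IDs is uniformly $\delta$), and you correctly flag the one place that needs care, namely that the transported partition is again the compressed one; the only loose end is the case of constant subs $\top,\bot$, where the induction hypothesis does not literally apply but the structures trivially coincide. What your approach buys is self-containedness — it does not presuppose the canonicity theorem (which in the paper is actually stated \emph{after} this result) — at the cost of redoing its induction; what the paper's approach buys is the crisp conceptual point that offset shifting \emph{is} variable substitution, after which the theorem is an immediate corollary of canonicity.
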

\begin{proof}
    The Boolean function $\langle \alpha, k + \delta \rangle$ is the one wherein every appearance of every variable $l(\idtov(i))$ in $\langle \alpha, k\rangle$ is replaced with $l(\idtov(i+\delta))$.
    It is equivalent to $\langle \beta, \ell \rangle$.
\end{proof}

It is possible that there exist two VS-SDDs $(\alpha, k)$ and $(\beta, \ell)$ where
$\alpha = \beta$ but vtrees $\idtov(k)$ and $\idtov(\ell)$ are not isomorphic.
In such case, we do not share their structure.
In other words, we share the identical structures only when for the offsets $k$ and $\ell$, $\idtov(k)$ and $\idtov(\ell)$ are isomorphic.
We call this the \emph{identical vtree rule}. 
The VS-SDD in Fig.~\ref{fig:sdds}(d) satisfies this rule.
Such rule is unique to VS-SDDs, since in SDDs all identical structures are fully merged (i.e.~reduced).
This rule is crucial for guaranteeing some attractive properties of VS-SDDs introduced in later sections.

\section{Properties of VS-SDD}
\label{sec:property}
We show here some basic VS-SDD properties.
First, we prove the canonicity of some classes of VS-SDD.
Then we give proofs on VS-SDD size.

\subsection{Canonicity}
We say a class of VS-SDD  is canonical iff, given a vtree, for any Boolean function $f$, there is exactly one VS-SDD in this class representing $f$. 
We first introduce two classes of VS-SDDs, both have counterparts in SDDs.
\begin{definition}
    We say VS-SDD $(\alpha, k)$ is \emph{compressed} iff for each VS-SDD $(\beta, \ell)$ appearing in $(\alpha, k)$ where $\beta$ is a decomposition, it forms compressed $X$-partition.
    We say a VS-SDD $(\alpha, k)$ is \emph{trimmed} if it contains no decompositions with form of $\{([\top, \cdot], [\beta, d])\}$ and $\{
    ([\beta, d], [\top, \cdot]), ([\neg \beta, d], [\bot, \cdot])\}$. 
    We also say VS-SDD is \emph{lightly trimmed} if it contains no decompositions with form of $\{([\top, \cdot], [\top, \cdot])\}$ and $\{([\top, \cdot], [\bot, \cdot])\}$.
    We say VS-SDD $(\alpha, k)$ is \emph{normalized} iff for each VS-SDD $(\beta, \ell)$ appearing in $(\alpha, k)$ where $\beta$ is a decomposition, every prime $[p_i, d_i]$ ensures that $\idtov(d_i + \ell)$ is the left child of vtree node $\idtov(\ell)$ and every sub $[s_i, e_i]$ ensures that $\idtov( e_i + \ell)$ is the right child of vtree node $\idtov(\ell)$.
    \end{definition}
    
The proof of canonicity is almost identical to that for SDDs.
We first introduce some concepts and notations.
We use $(\alpha, k) \equiv (\beta, \ell)$ to represent that the corresponding Boolean functions are identical. 
\begin{definition}
    A Boolean function $f$ \emph{essentially depends} on vtree node $v$ if $f$ is not trivial and $f$ is a deepest node that includes all variables that $f$ essentially depends on. 
\end{definition}
\begin{lemma}[\cite{darwiche11sdd}]
A non-trivial function essentially depends on exactly one vtree node.
\end{lemma}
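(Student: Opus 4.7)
The plan is to reformulate the definition in set-theoretic terms and then reduce the claim to a simple structural property of trees. Reading $v$ (not $f$) as the intended subject of ``is a deepest node'', the statement asks us to show that, writing $S(f)$ for the set of variables on which $f$ essentially depends, there is a unique deepest vtree node $v$ such that $\mathrm{leaves}(v) \supseteq S(f)$.

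First I would verify that $S(f) \neq \emptyset$ whenever $f$ is non-trivial. If $f$ did not essentially depend on any variable, then $f \mid X = f \mid \neg X$ for every variable $X$, and a straightforward induction on the number of variables then forces $f$ to be a constant, contradicting non-triviality. So $S(f)$ is a non-empty subset of the variable set $\mathbf{X}$ of the vtree.

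Next I would establish existence and uniqueness via the candidate set
\[
T \;=\; \{\, v : v \text{ is a vtree node with } \mathrm{leaves}(v) \supseteq S(f) \,\}.
\]
Existence is immediate: the root of the vtree lies in $T$, and $T$ is a finite non-empty set of vtree nodes, so it contains an element of maximum depth. Uniqueness is where the argument has genuine content, but it is a short tree-structural observation: I would show that any two nodes of $T$ must lie on a common root-to-leaf path. Suppose $v, w \in T$ are not in ancestor–descendant relation. Then the subtrees rooted at $v$ and $w$ are vertex-disjoint, hence have disjoint leaf sets. But both leaf sets contain $S(f)$, which is non-empty: contradiction. Therefore $T$ is totally ordered by the ancestor–descendant relation, i.e.\ it is a chain, and a chain in a finite tree has a unique deepest element.

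I do not expect a real obstacle here; the only slightly non-obvious step is the implication ``$f$ non-trivial $\Rightarrow S(f) \neq \emptyset$,'' and everything else reduces to the elementary fact that two vtree nodes whose subtrees share a leaf must be comparable under the ancestor–descendant order. The argument is essentially the same one used by Darwiche for SDDs and transfers to the VS-SDD setting verbatim, since the definition of essential dependence on a vtree node is purely about the variables of $f$ and the leaves of $v$, with no reference to offsets.
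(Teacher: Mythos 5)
Your proof is correct and is essentially the standard argument (the paper itself cites this lemma from Darwiche's SDD paper without reproving it): non-triviality gives a non-empty set of essential variables, and any two vtree nodes whose leaf sets both contain that non-empty set must be comparable under the ancestor--descendant order, so the candidate set is a chain with a unique deepest element. You also correctly read the typo in the paper's definition (``$f$ is a deepest node'' should be ``$v$ is a deepest node''), and no gap remains.
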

\begin{lemma}
\label{lem:essdep}
Let $(\alpha, k)$ be a trimmed and compressed VS-SDD. If $(\alpha, k) \equiv \mathfalse$, then $\alpha = \bot$. If $(\alpha,k) \equiv \mathtrue$, then $\alpha = \top$. Otherwise, $\idtov(k)$ always equals to the vtree node $v$ that $\langle\alpha,k\rangle$ essentially depends on.
\end{lemma}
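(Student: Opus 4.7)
My plan is to proceed by structural induction on the VS-SDD structure $\alpha$, closely mirroring the canonicity argument for SDDs but tracking the offset carefully so that ``the vtree node essentially depended on'' is always read as $\idtov(k)$ (or $\idtov(d_i + k)$ for a child). The base cases are immediate: for $\alpha = \top$ or $\alpha = \bot$ the semantics directly give the first two statements, and for $\alpha \in \{\apos, \aneg\}$ the VS-SDD represents $l(\idtov(k))$ or its negation, which is non-trivial and essentially depends on the leaf vtree node $\idtov(k)$.

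For the inductive step, I assume $\alpha = \{([p_i, d_i], [s_i, e_i])\}_{i=1}^n$ is a decomposition with $\idtov(k)$ an internal node. I would prove three facts and then conclude via the preceding ``unique essential vtree node'' lemma: (i) $\langle \alpha, k\rangle$ is non-trivial; (ii) it essentially depends on some variable of the left subtree of $\idtov(k)$; (iii) similarly for the right subtree. Together these force the essential-depending vtree node to lie at or above $\idtov(k)$ and to strictly cover both subtrees, hence to be $\idtov(k)$ itself.

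For (i), suppose $\langle \alpha, k\rangle \equiv \mathfalse$. Since the primes $\langle p_i, d_i + k\rangle$ form a partition (hence are mutually exclusive, collectively exhaustive, and individually non-false), every $\langle s_i, e_i + k\rangle$ must be $\mathfalse$. The inductive hypothesis then yields $s_i = \bot$ for all $i$; compression (distinct subs) forces $n = 1$, and a partition of size one requires $\langle p_1, d_1 + k\rangle \equiv \mathtrue$, so by induction $p_1 = \top$. This produces the decomposition $\{([\top, \cdot], [\bot, \cdot])\}$, which violates the trimming hypothesis. The case $\langle \alpha, k\rangle \equiv \mathtrue$ is symmetric and lands on the forbidden form $\{([\top, \cdot], [\top, \cdot])\}$. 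For (ii), if $\langle \alpha, k\rangle$ were independent of the left-subtree variables, its unique compressed $\mathbf{X}$-partition (Theorem~3 of \cite{darwiche11sdd}) would be the single element with prime $\mathtrue$ and sub $\langle \alpha, k\rangle$; by induction this forces a decomposition of the form $\{([\top, \cdot], [\beta, e])\}$, again violating trimming. For (iii), if $\langle \alpha, k\rangle$ were independent of the right-subtree variables, each $\langle s_i, e_i + k\rangle$ would be constant, so by induction each $s_i \in \{\top, \bot\}$; compression caps $n \le 2$, and the remaining cases $\{([\top, \cdot], [\top, \cdot])\}$, $\{([\top, \cdot], [\bot, \cdot])\}$, and $\{([\beta, d], [\top, \cdot]), ([\neg\beta, d], [\bot, \cdot])\}$ are all forbidden by trimming.

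The main obstacle is being careful about the translation between properties of the Boolean function $\langle p_i, d_i + k\rangle$ and properties of the VS-SDD structure $p_i$: the inductive hypothesis is applied to the smaller VS-SDD $(p_i, d_i + k)$ (and similarly $(s_i, e_i + k)$), which requires verifying that compressedness and trimmedness are inherited by sub-VS-SDDs, and that the identity $\langle \cdot, \cdot \rangle \equiv \mathtrue/\mathfalse$ uniquely determines the structure as $\top$ or $\bot$ regardless of the offset. Once this bookkeeping is in place, the rest of the argument is essentially the SDD canonicity proof of \cite{darwiche11sdd} with offsets threaded through.
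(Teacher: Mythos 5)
The paper never actually proves Lemma~\ref{lem:essdep}: it is stated bare, with the surrounding text deferring to the analogous fact for SDDs in \cite{darwiche11sdd}. Your structural induction is exactly the argument needed to fill that gap, and steps (i) and (ii) are sound: non-triviality of a decomposition node follows from the partition property, compression, and the ban on $\{([\top,\cdot],[\beta,d])\}$; dependence on some left-subtree variable follows from the uniqueness of compressed $\mathbf{X}$-partitions plus the same ban. The bookkeeping you flag at the end (sub-VS-SDDs inherit compressedness and trimmedness) is immediate, since both properties are quantified over all decompositions appearing in $(\alpha,k)$.

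The one step that needs more care is the $n=2$ case of (iii). The partition property only gives you that the second prime is \emph{semantically} the negation of the first, i.e.\ $\langle p_2, d_2+k\rangle \equiv \neg\langle p_1, d_1+k\rangle$; to conclude that the decomposition literally matches the forbidden pattern $\{([\beta,d],[\top,\cdot]),([\neg\beta,d],[\bot,\cdot])\}$ you need both (a) $d_1=d_2$ and (b) $p_2=\neg p_1$ as structures. Point (a) does follow from your induction hypothesis because negation preserves essential support, so the two primes essentially depend on the same vtree node and hence carry the same offset---but this should be said explicitly. Point (b) is a canonicity statement about the primes, and canonicity (Theorem~\ref{thm:can-vs}) is itself proved \emph{using} this lemma, so as written there is a circularity. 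The standard repairs are either to read $\neg\beta$ in the definition of trimmed semantically (any structure whose semantics is the negation of $\langle\beta,\cdot\rangle$), under which your case analysis closes immediately, or to run a single induction on vtree depth that proves Lemma~\ref{lem:essdep} and canonicity simultaneously, so that canonicity of the primes is available when you need it. Either fix is routine; with one of them in place the proof is complete.
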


The above lemma suggests that compressed and trimmed VS-SDDs can be partitioned into groups depending on the offset. We can prove the canonicity by exploiting this fact.
\begin{theorem}
	\label{thm:can-vs}
	Compressed and trimmed VS-SDDs with the same vtree, $v$, are canonical.
	Also, compressed, lightly trimmed, and normalized VS-SDDs with the same vtree,$v$, are canonical.
\end{theorem}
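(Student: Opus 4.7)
The plan is to adapt the proof of canonicity for SDDs (Theorem~\ref{thm:SDDcanonical}) to the VS-SDD setting by strong induction on the height of the vtree node $\idtov(k)$. For each Boolean function $f$ and each candidate compressed and trimmed VS-SDD $(\alpha, k)$ representing $f$, I would argue that both the offset $k$ and the structure $\alpha$ are forced by $f$ and $v$, which yields canonicity in the first claim; a parallel but slightly simpler argument handles the normalized case.

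For the compressed and trimmed case, Lemma~\ref{lem:essdep} immediately pins down $k$: if $f$ is trivial then $\alpha\in\{\top,\bot\}$ and the offset is irrelevant, and otherwise $\idtov(k)$ must be the unique vtree node on which $f$ essentially depends. If $\idtov(k)$ is a leaf, then $f$ is a literal and $\alpha$ is forced to $\apos$ or $\aneg$. Otherwise, I would invoke Theorem~3 of~\cite{darwiche11sdd} to obtain the unique compressed $\mathbf{X}_L$-partition $\{(p_i^*, s_i^*)\}$ of $f$, where $\mathbf{X}_L$ is the set of variables in the left subtree of $\idtov(k)$. In a compressed and trimmed VS-SDD, primes are never trivial (since $p_i \neq \bot$ holds by the partition definition, and $p_i = \top$ would force $n=1$, a form excluded by trimmedness), so by the inductive hypothesis each structure $p_i$ together with its absolute offset $d_i + k$ is uniquely determined; subtracting $k$ then determines $d_i$. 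Subs are treated analogously in the non-trivial case, while trivial subs are recorded as $\top$ or $\bot$ with no offset, following the convention stated after the VS-SDD definition. Hence the set of elements, and therefore $\alpha$, is uniquely determined.

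For the compressed, lightly trimmed, and normalized case, I would fix $k$ to be the ID of the root of $v$. Normalization then forces every prime in the top decomposition to respect the left child of $\idtov(k)$ and every sub to respect the right child, which together with the preorder indexing of vtree nodes determines $d_i$ and $e_i$ uniquely. The unique compressed $\mathbf{X}_L$-partition again fixes the Boolean functions $p_i^*$ and $s_i^*$, and the inductive hypothesis supplies a unique compressed, lightly trimmed, and normalized VS-SDD for each prime and sub. Trivial primes and subs are permitted here, but the lightly trimmed condition only excludes the two degenerate one-element decompositions, which in any case cannot arise from a proper compressed partition of a function that essentially depends on $\idtov(k)$.

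The main obstacle I anticipate is the bookkeeping around trivial primes and subs: for a constant prime or sub the offset carries no semantic information, and I must verify that the canonical form treats these uniformly so that the representation as a set of element pairs is unambiguous. A closely related subtlety is the identical vtree rule introduced in Section~\ref{sec:subeq}: this rule is precisely what makes the final merging of substructures into a shared DAG produce a canonical object, and I must check that sharing never forces an unintended identification. Both points reduce to the observation that $\alpha$ is shared across two offsets $k$ and $\ell$ only when $\idtov(k)$ and $\idtov(\ell)$ are isomorphic, which by the substitution-equivalence analysis is exactly the condition under which the two VS-SDDs genuinely represent substitution-equivalent subfunctions.
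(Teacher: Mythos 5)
Your proposal is correct and follows essentially the same route as the paper's proof: pin down the offset via Lemma~\ref{lem:essdep}, invoke the uniqueness of the compressed $\mathbf{X}$-partition (Theorem~3 of~\cite{darwiche11sdd}), and induct over the vtree to force the primes, subs, and their offset differences. Your extra remarks on trivial primes/subs and the identical vtree rule are sound clarifications of points the paper leaves implicit, but they do not change the argument.
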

\begin{proof}
Here we give the proof for the case of compressed and trimmed VS-SDDs.
The proof for compressed, lightly trimmed and normalized SDDs can be
constructed in a similar way.

If two compressed SDDs $(\alpha, k)$ and $(\beta, \ell)$ satisfy $(\alpha, k) = (\beta, \ell)$, then $(\alpha, k) \equiv (\beta, \ell)$ from the definition.
Suppose $\langle \alpha, k\rangle  = \langle \beta, \ell\rangle$ and let $f = \langle \alpha, k \rangle = \langle \beta, \ell \rangle$. 
If $f = \mathtrue$, then $\alpha = \beta = \top$ and they are canonical. Similarly, if $f = \mathfalse$, then $\alpha = \beta = \bot$.

Next we consider the case of $f$ being non-trivial. From Lemma~\ref{lem:essdep}, $\idtov(k)=w=\idtov(\ell)$ where $w$ is the vtree node that $f$ essentially depends on.
Suppose $w$ is a leaf, then VS-SDDs must be literals and hence $(\alpha, k) = (\beta, \ell)$. Suppose now that $w$ is internal and that the theorem 
holds for VS-SDDs whose offsets correspond to descendant nodes of $\idtov(k)$. 
Let $w^l$ and $w^r$ be the left and the right subtree of $w$, respectively. Let $\mathbf{X}$ be variables in $w^l$, $\mathbf{Y}$  be variables in $w^r$, $\alpha = \{([p_1, d_1], [s_1, e_1]), \ldots, ([p_n, d_n], [s_n, e_n])\}$ and $\beta = \{([q_1, b_1], [r_1, c_1]), \ldots, ([q_m, b_m], [r_m, c_m])\}$.
By the definition, offsets $d_i + k$ and $b_j + \ell$ correspond to vtree nodes in $w^l$ and offsets $e_i + k$ and $c_j + \ell$ correspond to vtree nodes in $w^r$.
Since compressed $\mathbf{X}$-partitions $\{(\langle p_1, d_1 \rangle, \langle s_1, e_1 \rangle), \ldots, (\langle p_n, d_n \rangle,  \langle s_n, e_n \rangle)\}$ and
$\{(\langle q_1, b_1 \rangle, \langle r_1, c_1 \rangle), \ldots, (\langle q_m, b_m \rangle,  \langle r_m, c_m \rangle)\}$ are identical (see Theorem 3 of \cite{darwiche11sdd}), $n = m$ and there is a one-to-one $\equiv$-correspondence between the primes and subs.
From the inductive hypothesis, this means there is a one-to-one $=$-correspondence between the primes and subs. 
This implies $\alpha = \beta$ and thus $(\alpha, k) = (\beta, \ell)$.
\end{proof}

\subsection{About the Size: Exponential Compression}
\label{sec:exp}
We here compare VS-SDD size with SDD size.
First of all, we observe that VS-SDD is always smaller than SDDs since it is made by 
sharing substitution-equivalent nodes in SDDs and no other size changes occur.
\begin{proposition}
	\label{prop:less}
	For any SDD $\alpha$ defined with vtree $v$, there exists a VS-SDD whose size is not larger than $|\alpha|$.
\end{proposition}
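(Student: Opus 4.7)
The plan is to build, from any SDD $\alpha$ respecting vtree node $w$, a VS-SDD $(\alpha', \vtoid(w))$ that has the same underlying DAG shape as $\alpha$ and represents the same Boolean function, so that $|\alpha'| = |\alpha|$ holds automatically. The core observation is that every SDD node uniquely determines the vtree node it respects, so the extra information an edge of a VS-SDD must carry, namely an offset, is determined by the vtree nodes attached to its source and target and does not depend on which path through the DAG we follow to reach that edge.

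I would define the transformation $\alpha \mapsto \alpha'$ recursively over the SDD structure. Constants map to themselves. A literal SDD $X$ (resp.\ $\neg X$) at a leaf vtree node becomes $\apos$ (resp.\ $\aneg$), and the resulting VS-SDD has its offset equal to the ID of that leaf. For a decomposition $\alpha = \{(p_1, s_1), \ldots, (p_n, s_n)\}$ respecting an internal node $w$, let $w_{p_i}$ and $w_{s_i}$ be the vtree nodes respected by $p_i$ and $s_i$, and let $(p_i', \vtoid(w_{p_i}))$ and $(s_i', \vtoid(w_{s_i}))$ be the VS-SDDs obtained recursively. Setting $d_i = \vtoid(w_{p_i}) - \vtoid(w)$ and $e_i = \vtoid(w_{s_i}) - \vtoid(w)$, define $\alpha' = \{([p_1', d_1], [s_1', e_1]), \ldots, ([p_n', d_n], [s_n', e_n])\}$.

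Next, I would verify the three conditions that make $(\alpha', \vtoid(w))$ a legitimate VS-SDD with $\langle \alpha', \vtoid(w)\rangle = \langle \alpha\rangle$. Since $p_i$ respects a left descendant of $w$, we have $\idtov(d_i + \vtoid(w)) = w_{p_i}$, which is indeed a left descendant of $\idtov(\vtoid(w)) = w$; symmetrically for subs. By structural induction, $\langle p_i', d_i + \vtoid(w)\rangle = \langle p_i', \vtoid(w_{p_i})\rangle = \langle p_i\rangle$, and similarly $\langle s_i', e_i + \vtoid(w)\rangle = \langle s_i\rangle$, so the partition property is inherited and
\[
\langle \alpha', \vtoid(w)\rangle = \bigvee_{i=1}^{n} \bigl(\langle p_i\rangle \wedge \langle s_i\rangle\bigr) = \langle \alpha\rangle.
\]

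Finally, each decomposition of $\alpha$ yields a decomposition of $\alpha'$ with the same number of elements, so $|\alpha'| = |\alpha|$ and the existential claim follows. The main thing to be careful about, rather than a genuine obstacle, is the well-definedness of offsets under node sharing: because an SDD node respects a unique vtree node, every incoming edge to a given node in $\alpha'$ is assigned the same computed offset, so the construction is consistent on the shared DAG. No additional merging of substitution-equivalent substructures is required for the stated bound; such merges, which realize the actual motivation for VS-SDDs, would only reduce the size further.
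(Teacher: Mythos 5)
Your proof is correct and matches the paper's argument: the paper justifies this proposition with the one-line observation that a VS-SDD is obtained from the SDD by re-encoding each node's respecting vtree node as an edge offset (a size-preserving relabeling) and then optionally merging substitution-equivalent nodes, which can only shrink it. Your recursive construction is simply the explicit, verified form of that same relabeling.
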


We turn our focus to the best compression ratio of the VS-SDD. 
Since a vtree has $M$ leaves where $M$ is the number of variables, a vtree might have at most $M$ isomorphic subtrees.
Thus  the lower bound of  VS-SDD size  is $1/M$ of SDD when we employ the identical vtree rule.
Here we prove that there is a series of functions that almost achieves this compression ratio asymptotically.
\begin{theorem}
	\label{thm:ratio}
	There exists a sequence of Boolean functions $f_1,f_2,\ldots$ such that $f_j$ uses $\order{2^j}$ variables, the size of a compressed SDD representing $f_j$ is $\morder{2^j}$ with any vtree, and that of a compressed VS-SDD representing $f_j$ is $\order{j}$ with a particular vtree.
\end{theorem}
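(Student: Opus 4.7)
My plan is to take $f_j$ to be the parity function $f_j = X_1 \oplus X_2 \oplus \cdots \oplus X_M$ on $M = 2^j$ variables, and to use the balanced binary vtree with preorder IDs as the ``particular vtree'' for the VS-SDD upper bound.

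For the SDD lower bound, I would rely on the rigid decomposition structure of parity. For any vtree $v$ and any internal vtree node $w$, the subfunction of $f_j$ over the variables beneath $w$ is itself a parity or negparity function, and its unique compressed $\mathbf{X}_w$-partition has exactly two elements $\{(p_w, s_w), (\neg p_w, \neg s_w)\}$, where $p_w, s_w$ are the parity subfunctions of the left and right sub-vtrees of $w$. None of these are trivial, so trimming cannot collapse them, and since parity subfunctions over disjoint variable sets essentially depend on disjoint variables (and hence cannot coincide), the canonicity of compressed trimmed SDDs (Theorem~\ref{thm:SDDcanonical}) forces one distinct decomposition per internal vtree node of $v$. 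This yields at least $M - 1$ decompositions of size $2$, giving SDD size $\Omega(M) = \Omega(2^j)$. Since any non-trimmed compressed SDD is no smaller than the trimmed one, this lower bound holds for compressed SDDs in general.

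For the VS-SDD upper bound, I would exploit the symmetry of the balanced vtree with preorder IDs. At every depth $d \in \{1, \ldots, j\}$, the $2^{j-d}$ sub-vtrees of depth $d$ are pairwise graph-isomorphic, and the parity (resp.\ negparity) subfunctions over their variables are pairwise substitution-equivalent via the shift permutations $\pi_{u,w}$. By the substitution-equivalence sharing theorem and the identical vtree rule, all $2^{j-d}$ parity subfunctions at depth $d$ are represented by a single VS-SDD decomposition structure, with only the offsets differing; similarly for negparity. This produces exactly two decomposition structures per depth, for a total of $2j$ structures of size $2$, i.e.\ VS-SDD size $4j = O(j)$.

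The main obstacle will be making the SDD lower bound rigorous for arbitrary (possibly unbalanced or non-normalized) vtrees. Even allowing primes and subs to respect arbitrary left/right descendants of the current vtree node, one must argue that the SDD cannot short-circuit internal vtree nodes for parity: because each internal sub-vtree's parity subfunction essentially depends on all of its variables and is distinct from the parity over any different variable set, canonicity forces one decomposition per internal vtree node, completing the proof.
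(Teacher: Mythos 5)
Your proposal is essentially correct, but it takes a genuinely different route from the paper. The paper's witness is the ``matchings of a complete binary tree'' indicator function, compiled against a recursively defined, non-balanced vtree $v_j$; its upper bound is an induction that must simultaneously track four functions ($f_j$, $\neg f_j$, $f'_j=\neg X_1\wedge\neg X_2\wedge f_j$, $\neg f'_j$), and its lower bound is a general claim that any function essentially depending on all $M$ variables forces $\Omega(M)$ literal nodes (hence $\Omega(M)$ elements) in \emph{any} SDD. Your witness is parity with a balanced vtree: the upper bound follows almost immediately from the total symmetry of parity (all same-height sub-vtrees are isomorphic and carry substitution-equivalent parity/negparity subfunctions, so the identical vtree rule collapses each level to two shared structures), and the lower bound comes from the rigid two-element decomposition of parity at every internal vtree node. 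Your construction is simpler and arguably more transparent; the paper's general lower-bound claim is more reusable, and its example illustrates sharing under a non-trivial (non-fully-symmetric) function. Two points in your lower bound deserve tightening: (i) ``one distinct decomposition per internal vtree node'' should be justified by noting that the prime/sub at node $w$ is the parity (or negparity) of exactly the variables below $w$, hence essentially depends on $w$ and differs from the subfunction at every other node, so canonicity cannot merge them; this could be replaced outright by the paper's simpler literal-counting argument, which applies verbatim to parity since $f|_{X_i}=\neg(f|_{\neg X_i})$ is never trivial for $M\ge 2$; and (ii) the passage from trimmed to arbitrary compressed SDDs needs the observation that trimming only deletes nodes and yields the canonical form, so any compressed SDD is at least as large as the canonical compressed trimmed one.
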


The compression ratio is $\order{j/2^j}=\order{\log M/M}$.
Theorem \ref{thm:ratio} makes a stronger statement, because ``any'' vtree can be considered for SDD.

One of the sequences satisfying Theorem \ref{thm:ratio} is as follows:
\fontsize{9pt}{0pt}\selectfont
\begin{align*}
	\textstyle
	f_j(\mathbf{X}) = (\neg X_1\vee\neg X_2)\wedge\bigwedge_{i=1}^{2^j-2}\bigl(\left(\neg X_i\vee\neg X_{2i+1}\right)\wedge\left(\neg X_{i}\vee\neg X_{2i+2}\right)
	\wedge\left(\neg X_{2i+1}\vee\neg X_{2i+2}\right)\bigr).
\end{align*}
\normalsize
By considering a complete binary tree like Fig.~\ref{fig:recursive}(a), we observe that $f_j(\mathbf{x})=\mathtrue$ iff the edges whose corresponding variables are set to $\mathtrue$ constitute a matching.

\begin{figure}[tbp]
	\centering
	\includegraphics[keepaspectratio,scale=0.8]{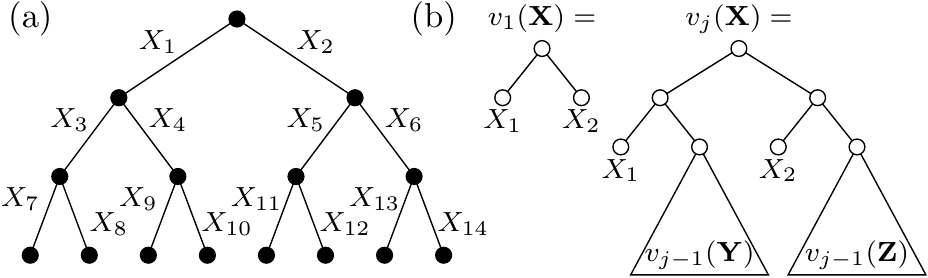}
	\caption{(a) A complete binary tree with variable-labeled edges. (b) The recursive structure of vtree $v_j(\mathbf{X})$. Here $X_1$ and $X_2$ indicate the first and second (resp.) variables of $\mathbf{X}$.}
	\label{fig:recursive}
\end{figure}

We outline the proof here; details are given in the Appendix.
Since the first part of Theorem \ref{thm:ratio} can easily be proved, we refer to the second part.
We define vtree $v_j(\mathbf{X})$ in a recursive manner as shown in Fig.~\ref{fig:recursive}(b).
Here $\mathbf{Y}$ includes the variables corresponding to the edges below $X_1$ when considering the complete binary tree as in Fig.~\ref{fig:recursive}(a) (namely $X_3, X_4, \ldots$) and $\mathbf{Z}$ includes those corresponding to the edges below $X_2$ ($X_5, X_6, \ldots$).
Now we decompose $f_j(\mathbf{X})$ with respect to $v_j(\mathbf{X})$ by using $f_{j-1}(\mathbf{Y})$, $f_{j-1}(\mathbf{Z})$ and some other subfunctions.
We then use the fact that $f_{j-1}(\mathbf{Y})$ and $f_{j-1}(\mathbf{Z})$ are substitution-equivalent with $\pi_{v_{j-1}(\mathbf{Y}),v_{j-1}(\mathbf{Z})}$.
By repetitively applying this argument, we observe that by decomposing $f_j$ with respect to $v_j$, the SDD of $f_j$ has $2^i$ nodes that represent $f_{j-i}(\cdot)$, which all represent substitution-equivalent functions, and thus the VS-SDD reduces the size exponentially.

\section{Operations of VS-SDD}
\label{sec:operation}

The most important property of VS-SDDs is that they support numerous key operations in polytime.
We focus here on the important queries and transformations shown in \cite{darwiche02KC}.
Most of these operations are based on $\mathapply$.
$\mathapply$ takes, given a vtree, two VS-SDDs $(\alpha, k)$, $(\beta, \ell)$  and binary operation $\circ$ such as $\vee$ (disjunction), $\wedge$ (conjunction) and $\oplus$ (exclusive-or), and returns a VS-SDD of $\langle\alpha,k\rangle\circ\langle\beta, \ell \rangle$.
By repeating $\mathapply$ operations, we can flexibly construct VS-SDDs representing various Boolean functions.

To simplify the explanation of $\mathapply$, we assume that VS-SDDs are normalized and thus have the same offset value $k$.
Given two normalized VS-SDDs $(\alpha, k)$, $(\beta, k)$,
Alg.~\ref{alg:apply-norm} provides pseudocode for the function $\mathapply(\alpha, \beta, k, \circ)$.
The mechanism behind the $\mathapply$ computation of VS-SDDs is as follows.
Let $f,g$ be Boolean functions with the same variable set, and suppose that $f$ is $\mathbf{X}$-partitioned as $f=\bigvee_{i=1}^{n}[p_i(\mathbf{X})\wedge s_i(\mathbf{Y})]$ and $g$ is also $\mathbf{X}$-partitioned (with the same $\mathbf{X}$) as $g=\bigvee_{j=1}^{m}[q_j(\mathbf{X})\wedge r_j(\mathbf{Y})]$.
Then, $f\circ g$ can be expressed as $\bigvee_{i=1}^{n}\bigvee_{j=1}^{m}[(p_i(\mathbf{X})\wedge q_j(\mathbf{X}))\wedge(s_i(\mathbf{Y})\circ r_j(\mathbf{Y}))]$,
 where $(p_i(\mathbf{X})\wedge q_j(\mathbf{X}))\wedge(p_{i'}(\mathbf{X})\wedge q_{j'}(\mathbf{X}))=\mathfalse$ for $(i,j)\neq (i',j')$ and $\bigvee_{i=1}^{n}\bigvee_{j=1}^{m}(p_i(\mathbf{X})\wedge q_j(\mathbf{X}))=\mathtrue$.
Thus, computing $p_i\wedge q_j$ and $s_i\circ r_j$ for each $(i,j)$ pair and ignoring the pairs such that $p_i\wedge q_j=\mathfalse$ yields the $\mathbf{X}$-partition of $f\circ g$.
Alg.~\ref{alg:apply-norm} follows this recursive definition.  

\begin{algorithm}[tbp]
	\caption{$\mathtt{Apply}(\alpha,\beta,k,\circ)$, which computes a VS-SDD representing $\langle\alpha, k\rangle\circ\langle\beta, k\rangle$ for two normalized VS-SDDs $(\alpha, k),(\beta, k)$ and a binary operator $\circ$.}
	\label{alg:apply-norm}
	\fontsize{9pt}{7pt}\selectfont
	$\mathtt{Cache}(\cdot, \cdot, \cdot) = \mathtt{nil}$ initially. $\mathtt{Expand}(\alpha)$ returns $\{([\top, \cdot], [\top, \cdot])\}$ if $\alpha = \top$; $\{([\top, \cdot], [\bot, \cdot])\}$ if $\alpha = \bot$; else $\alpha$. $\mathtt{UniqueD}(\gamma)$ returns $\top$ if $\gamma = \{([\top, \cdot],[\top, \cdot])\}$; $\bot$ if $\gamma = \{([\top, \cdot], [\bot, \cdot])\}$; else the unique VS-SDD with elements $\gamma$.
	
	\begin{algorithmic}[1]
		\IF{$\alpha$ and $\beta$ are either of $\top, \bot, \mathbf{v}, \neg \mathbf{v}$} 
        \STATE \textbf{return} the pair of corresponding value and offset.
		\ELSIF{ $\mathtt{Cache}(\alpha, \beta, \circ) \neq \mathtt{nil}$}
		\STATE  $\lambda \gets \mathtt{Cache}(\alpha, \beta, \circ)$
		\STATE \textbf{return} $(\lambda, k)$
		\ELSE
    	\STATE $\gamma \gets \{\}$
    	\FORALL{elements $([p_i, d], [s_i, e])$ in $\mathtt{Expand}(\alpha)$}
    	\FORALL{elements $([q_j, d], [r_j, e])$ in $\mathtt{Expand}(\beta)$}
    	\STATE $(p, \ell_p) \gets \mathtt{Apply}(p_i, q_j, d+k, \circ)$
        \IF{$(p, \ell_p)$ is consistent}
        \STATE $(s, \ell_s) \gets \mathtt{Apply}(s_i, r_j, e+k, \circ)$
        \STATE add element $([p, \ell_p - k], [s, \ell_s - k])$ to $\gamma$
        \ENDIF
    	\ENDFOR
    	\ENDFOR
		\ENDIF
		\STATE  $\lambda \gets \mathtt{UniqueD}(\gamma)$, $\mathtt{Cache}(\alpha, \beta, \circ) \gets \lambda$
		\STATE \textbf{return} $(\lambda, k)$
	\normalsize
	\end{algorithmic}
\end{algorithm}

\begin{proposition}
    \label{prop:apply}
    $\mathapply (\alpha, \beta, k, \circ)$ runs in $O(|\alpha||\beta|)$ time.
\end{proposition}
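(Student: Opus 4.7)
The plan is to combine an offset-invariance property of $\mathapply$'s output with the standard pairwise-caching argument used for SDD $\mathapply$. First I would establish the key invariant that $\mathapply(\alpha, \beta, k, \circ)$ always returns $(\lambda, k)$ where the structure $\lambda$ depends only on the triple $(\alpha, \beta, \circ)$, not on the offset $k$. This is exactly what makes the cache keyed solely on $(\alpha, \beta, \circ)$ sound. I would prove it by induction on recursion depth: the base cases (constants, literals) produce structures determined by $\alpha, \beta, \circ$ alone, and in the decomposition case each recursive call $\mathapply(p_i, q_j, d+k, \circ)$ returns some $(p, \ell_p)$ with $\ell_p = d+k$ and $p$ a function of $(p_i, q_j, \circ)$ by the induction hypothesis. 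The element inserted into $\gamma$ is then $[p, \ell_p - k] = [p, d]$, and analogously for the sub, so $\gamma$ (and hence $\lambda = \mathtt{UniqueD}(\gamma)$) is independent of $k$.

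With the invariant in hand, the complexity follows the classical SDD amortization. Every non-cached execution of the body corresponds to a unique pair $(\alpha', \beta')$ of decomposition substructures of the inputs (treating $\mathtt{Expand}$-ed constants as decompositions of size $1$), and thanks to the cache each such pair is processed at most once. The local work at pair $(\alpha', \beta')$, excluding recursive calls, is dominated by the double \textbf{for} loop over elements and is $\order{|\alpha'| \cdot |\beta'|}$. Summing,
\begin{equation*}
\sum_{\alpha'} \sum_{\beta'} \order{|\alpha'| \cdot |\beta'|} \;=\; \order{\Bigl(\sum_{\alpha'}|\alpha'|\Bigr) \Bigl(\sum_{\beta'}|\beta'|\Bigr)} \;=\; \order{|\alpha| \cdot |\beta|},
\end{equation*}
using that $|\alpha|$ equals the total number of elements across all decompositions of $\alpha$ by definition, and analogously for $|\beta|$.

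The main obstacle is the invariance step. It is the piece of the argument specific to VS-SDD and absent from the classical SDD complexity proof: one must verify carefully that the ``subtract $k$ when storing, add $k$ before recursing'' bookkeeping of Algorithm~\ref{alg:apply-norm} preserves structural equality across invocations with different offsets, so that the cache can safely ignore $k$. Without this property the cache key would need to include $k$, distinct shifts of the same substitution-equivalent subproblem would each occupy a fresh cache slot, and the $\order{|\alpha||\beta|}$ bound would fail.
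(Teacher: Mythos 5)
Your proof is correct and follows essentially the same route as the paper: the paper likewise justifies the $k$-free cache by arguing that the outputs of $\mathapply$ for different offsets coincide (it phrases this semantically, via the fact that composing substitution-equivalent pairs yields substitution-equivalent results, whereas you prove the same offset-independence of the returned structure by direct induction on the algorithm), and then applies the standard SDD pairwise amortization over decomposition pairs. The only point you leave implicit is that the consistency test on each computed prime must itself be memoized so that its total cost stays within $\order{|\alpha||\beta|}$, which is how the paper handles it in its appendix treatment of the trimmed case.
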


The above result  is the same as in the case of $\mathapply$ for SDDs. The key to achieving this result is that we use $\mathtt{Cache}$ without using offset $k$ as a key. We use the fact that if a pair of functions $f(\mathbf{X})$, $f^\prime(\mathbf{Y})$ and $g(\mathbf{X})$, $g^\prime(\mathbf{Y})$, where $\mathbf{X}$ and $\mathbf{Y}$ are non-overlapping, are substitution-equivalent with permutation $\pi$,
then the composed functions $f\circ g$ and $f^\prime \circ g^\prime$ are also 
substitution-equivalent with the same permutation $\pi$.
For example, let $f=A\wedge B$, $f'=C\wedge D$, $g=\neg A$ and $g'=\neg C$, in which $f$ and $f'$, and $g$ and $g'$ (resp.) are substitution-equivalent with permutation $\pi_{\idtov(2), \idtov(5)}$ defined with the vtree in Fig.~\ref{fig:sdds}(a).
Then $f\vee g=\neg A\vee B$ and $f'\vee g'=\neg C\vee D$ are also substitution-equivalent with $\pi_{\idtov(2), \idtov(5)}$. This means the results of $\mathapply(\alpha, \beta, k, \circ)$ with different $k$ are all substitution-equivalent. Therefore, we can reuse the result
obtained with different offsets.

If VS-SDDs are trimmed, we can also define $\mathapply$ operations
for them. While similar to the case for trimmed SDDs, the $\mathapply$ for trimmed VS-SDDs
are more complicated than that of normalized VS-SDDs since we have to take
different operations depending on the combination of offset values of input VS-SDDs.
However, the complexity of $\mathapply$ for trimmed VS-SDDs is also $O(|\alpha||\beta|)$.
We detail the $\mathapply$ for trimmed VS-SDDs in the Appendix \ref{appx:trimmedapply}.

Note that even if two VS-SDDs are compressed, the resulting VS-SDD cannot be assumed to be compressed since the same sub may appear.
It is said in \cite{broeck15comp} that there is a case in which compression makes an SDD exponentially larger, and thus a similar statement holds for VS-SDDs.
Therefore, if we oblige the output to be compressed, Prop.~\ref{prop:apply} does not hold.
Note that during $\mathapply$, compression can be performed by taking the disjunction of primes when the same subs emerge.

By extensively using Prop.~\ref{prop:apply} with some other algorithms, it can be shown that the various important queries and transformations in \cite{darwiche02KC} can be performed in polytime.
The proof is in the Appendix.

\setlength{\tabcolsep}{0.7mm}
\begin{table}[tbp]
    \caption{List of supported (a) queries and (b) transformations for SDDs (S), VS-SDDs (V), compressed SDDs (S(C)) and compressed VS-SDDs (V(C)).
    \checkmark\ indicates the existence of a polytime algorithm, while \textbullet\ indicates such polytime algorithm is shown to be impossible.}
    \label{tb:query}
	\begin{minipage}[t]{0.49\hsize}
		\centering
		\fontsize{9pt}{6pt}\selectfont
		\begin{tabular}{cccccc}
			\hline
			\multicolumn{2}{c}{(a) Query} & S & \textbf{V} & S(C) & \textbf{V(C)} \\
			\hline
			\textbf{CO} & consistency & \checkmark & \checkmark & \checkmark & \checkmark \\
			\textbf{VA} & validity & \checkmark & \checkmark & \checkmark & \checkmark \\
			\textbf{CE} & clausal entailment & \checkmark & \checkmark & \checkmark & \checkmark \\
			\textbf{IM} & implicant check & \checkmark & \checkmark & \checkmark & \checkmark \\
			\textbf{EQ} & equivalence check & \checkmark & \checkmark & \checkmark & \checkmark \\
			\textbf{CT} & model counting & \checkmark & \checkmark & \checkmark & \checkmark \\
			\textbf{SE} & sentential entailment & \checkmark & \checkmark & \checkmark & \checkmark \\
			\textbf{ME} & model enumeration & \checkmark & \checkmark & \checkmark & \checkmark \\
			\hline
		\end{tabular}
		\normalsize
	\end{minipage}
	\begin{minipage}[t]{0.5\hsize}
		\centering
		\fontsize{7pt}{6pt}\selectfont
		\begin{tabular}{cccccc}
			\hline
			\multicolumn{2}{c}{(b) Transformation} & S & \textbf{V} & S(C) & \textbf{V(C)} \\
			\hline
			$\mathbf{\wedge}$\textbf{C} & conjunction & \textbullet & \textbullet & \textbullet & \textbullet \\
			$\mathbf{\wedge}$\textbf{BC} & bounded conjunction & \checkmark & \checkmark & \textbullet & \textbullet \\
			$\mathbf{\vee}$\textbf{C} & disjunction & \textbullet & \textbullet & \textbullet & \textbullet \\
			$\mathbf{\vee}$\textbf{BC} & bounded disjunction & \checkmark & \checkmark & \textbullet & \textbullet \\
			$\mathbf{\neg}$\textbf{C} & negation & \checkmark & \checkmark & \checkmark & \checkmark \\
			\textbf{CD} & conditioning & \checkmark & \checkmark & \textbullet & \textbullet \\
			\textbf{FO} & forgetting & \textbullet & \textbullet & \textbullet & \textbullet \\
			\textbf{SFO} & singleton forgetting & \checkmark & \checkmark & \textbullet & \textbullet \\
			\hline
		\end{tabular}
		\normalsize
	\end{minipage}
\end{table}
\begin{proposition}
	\label{prop:query}
	The results in Table \ref{tb:query} hold.
\end{proposition}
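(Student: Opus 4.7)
The plan is to verify Table~\ref{tb:query} entry by entry, essentially mimicking the arguments used for SDDs in~\cite{darwiche02KC}. The impossibility entries (\textbullet) transfer for free: any SDD can be read as a VS-SDD whose structure uses only offset $0$ and no substitution-based sharing, so a polytime VS-SDD algorithm would yield a polytime SDD algorithm, contradicting the known lower bounds. Thus only the positive entries (\checkmark) require proof, and they naturally split into three families: (i) queries decided by structural inspection of the canonical form; (ii) operations reducible to a bounded number of $\mathapply$ calls; and (iii) bottom-up DAG traversals that must be adapted to handle offsets.

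Family (i) is handled by Theorem~\ref{thm:can-vs} and Lemma~\ref{lem:essdep}: on the compressed trimmed form, CO and VA are constant-time checks of whether $\alpha = \bot$ or $\alpha = \top$, and EQ is pair equality of the canonical $(\alpha, k)$. Family (ii) leverages Proposition~\ref{prop:apply}: CE and IM become one $\mathapply$ against a literal/clause VS-SDD followed by CO/VA; SE becomes one $\mathapply$ followed by EQ; $\wedge$BC and $\vee$BC are a bounded sequence of $\mathapply$ calls; $\neg$C is done structurally in linear time by recursively negating the subs of every decomposition (primes and offsets untouched), which works verbatim from the SDD case; and SFO on variable $X$ is computed as $\mathapply(f|X, f|\neg X, \vee)$, reducing to CD and one $\mathapply$.

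Family (iii) comprises CT, ME, and CD. Model counting exploits the invariance that the number of models of $\langle\alpha, k\rangle$ depends only on $\alpha$: if two offsets $k, \ell$ both satisfy the identical vtree rule for the same $\alpha$, then $\idtov(k)$ and $\idtov(\ell)$ are isomorphic vtrees with the same number of variables, so $\langle\alpha, k\rangle$ and $\langle\alpha, \ell\rangle$ are substitution-equivalent and hence equinumerous. This lets us cache the count on $\alpha$ alone, yielding a linear-time traversal, and ME uses the same invariance for enumeration. Conditioning $\mathrm{CD}(f, X = x)$ traverses the DAG, using $\vtoid(X)$ together with the accumulated offset to decide at each decomposition whether $X$ lives in the prime subtree or the sub subtree, and finally substitutes the corresponding literal leaf; a cache keyed on $(\alpha, k)$ bounds the work.

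The main obstacle is exactly the offset-dependency that appears in family (iii): for each operation we must decide whether the answer is invariant under shifts of $k$ (permitting a single-key SDD-style cache) or genuinely depends on $k$ (forcing a pair-keyed cache). The invariance cases (CT, ME, $\neg$C) are immediate once the substitution-equivalence observation is made, while the dependent cases (CD, SFO) are controlled by the identical vtree rule, which bounds the number of distinct $(\alpha, k)$ pairs visited in precisely the same way that it bounds the cache size in the analysis of $\mathapply$ in Proposition~\ref{prop:apply}.
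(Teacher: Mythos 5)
Your overall architecture matches the paper's: the positive entries all come down to $\mathapply$ (Prop.~\ref{prop:apply}), a model-counting traversal whose cache is keyed on $\alpha$ alone because substitution-equivalent functions have equal model counts, and a conditioning traversal whose cache must be keyed on $(\alpha,k)$ with the identical vtree rule bounding the blow-up to $\order{|S||\alpha|}$. Those are exactly the paper's three ingredients, and your identification of which operations are offset-invariant versus offset-dependent is the right organizing principle.

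There are, however, two places where your routing of the reductions has a genuine gap. First, for \textbf{SE} in the compressed column \textbf{V(C)}: you propose ``one $\mathapply$ followed by EQ,'' but compressed VS-SDDs do not support polytime $\wedge$\textbf{BC}, so the conjunction produced by $\mathapply$ cannot be brought into the compressed canonical class in polytime, and EQ-by-canonical-form-equality is therefore unavailable on it. The paper's fix is to leave the conjunction uncompressed and compare \emph{model counts} ($f\models g$ iff $\#(f\wedge g)=\#f$), which only needs the uncompressed intermediate; your proposal needs this substitution or it breaks precisely in the column where compression is mandatory. Relatedly, your family (i) decides CO/VA/EQ ``by structural inspection of the canonical form,'' which covers only the compressed columns; for the uncompressed column \textbf{V} there is no canonical form, and these queries must instead go through a linear consistency traversal or the SE reduction, as the paper does. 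Second, your claim that the impossibility entries ``transfer for free'' is not immediate for transformations: the lower bounds in the SDD literature are output-size lower bounds, and a VS-SDD of the output could a priori be much smaller than any SDD of it. The transfer is salvageable because the identical vtree rule forces $|\text{SDD}| \leq \order{M}\cdot|\text{VS-SDD}|$, so an exponential SDD lower bound still yields a superpolynomial VS-SDD lower bound, but this step needs to be said; the paper instead re-runs the arguments of the compilation-map lower bounds directly for VS-SDDs.
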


	Note that some applications, e.g.~probabilistic inference~\cite{sang05weight,darwiche02KC}, need \emph{weighted} model counting, where each variable has a weight.
	Though this cannot be performed in $\order{|\alpha|}$ time for VS-SDD $\alpha$, it can be performed at least as fast as is possible by using the corresponding SDD, by preparing, for each node, as many counters as the number of unified nodes in the original SDD.
	Moreover, if the weights of variables are the same for the same vtree structures, we can share counters, which speeds up the computation.

\section{Implementation}
\label{sec:implement}
We should address implementation in order to ensure space-efficiency.
One suspects that even if VS-SDD size is never larger than SDD size, the memory usage may increase because we store the information of respecting vtree node ids in the edges of a diagram (differentially) instead of in the nodes.
This is true if VS-SDDs are implemented as is.

However, a small modification avoids this problem.
First, for a normalized VS-SDD, we simply ignore the differences of vtree node ids attached to the edges.
Even so, we can recover the respecting vtree node because if an SDD node respects vtree $v$, its primes respect the left child of $v$ and its subs respect the right child of $v$.
Second, for a general VS-SDD, we just reuse the structure of the original SDD.
Among the SDD nodes that are merged into one in the VS-SDD structure, we just leave one representative (e.g.~the one with the smallest respecting vtree node id).
Then each of the other nodes has a pointer to the representative node instead of storing the prime-sub pairs.
An example of such a structure is drawn in Fig.~\ref{fig:vssddimpl}(c).
Here the dashed arrow indicates a pointer to the representative node described above.
Since each decomposition node has at least one prime-sub pair that typically uses two pointers, replacing it by single pointer will never increase memory usage.
Working with such a structure does not violate any properties about VS-SDDs, including the operations described above.

\begin{figure}[tbp]
	\centering
	\includegraphics[keepaspectratio,scale=0.75]{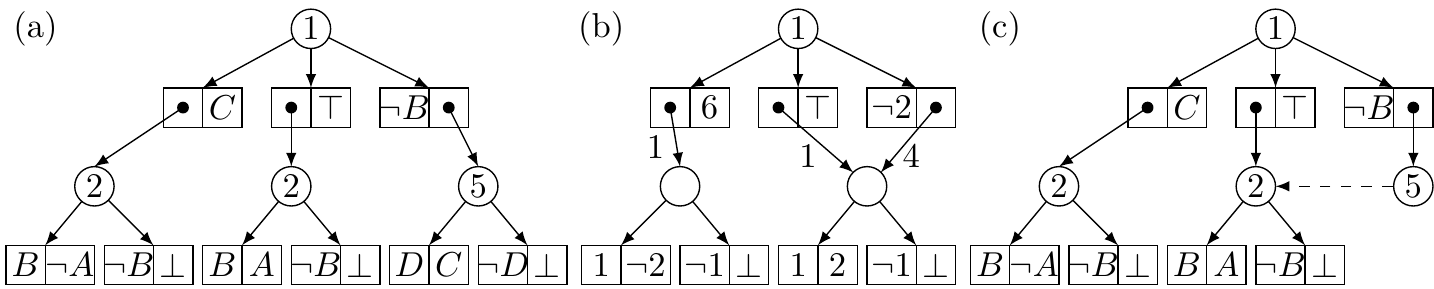}
	\caption{(a)(b) An SDD and a VS-SDD that are the same as Fig.~\ref{fig:sdds}. (c) The example of the representation of VS-SDD (b) using original SDD structure (a). The dashed arrow indicates a pointer to the representative node.}
	\label{fig:vssddimpl}
\end{figure}

\section{Evaluation}
\label{sec:eval}
We use some benchmarks of Boolean functions to evaluate how our approach reduces the size of an SDD.
we compile a CNF into an SDD with the dynamic vtree search~\cite{choi13dynamic} and then compare the sizes yielded by the SDD and its VS form (VS-SDD).
To compile a CNF, we use the SDD package version 2.0~\cite{choi18sdd} with a balanced initial vtree.
Here note that we use for both SDD and VS-SDD the same vtree, which is searched to suit for SDD.
All the experiments are conducted on a 64-bit macOS (High Sierra) machine with 2.5 GHz Intel Core i7 CPU ($1$ thread) and 16 GB RAM.

Here we focus on the planning CNF dataset that was used in the experiment of Sym-DDG~\cite{bart14sym}. 
The planning problem naturally exhibits symmetries, e.g.~see \cite{palacios05plan}.
Given time horizon $T$, this data represents a deterministic planning problem with varying initial and goal states.
Here we can choose an action from a fixed action set for each time point, and a plan for this problem is a time series of actions for $t=0,\ldots,T-1$ that leads from the initial state to the goal state.
For more details, see \cite{bart14sym}. 
We use the planning problems that were also used in the experiment of Sym-DDG: ``blocks-2'', ``bomb-5-1'', ``comm-5-2'', ``emptyroom-4/8'', and ``safe-5/30'', with varying time horizons $T=3,5,7,10$.

The next focus is on the benchmarks with apparent symmetries.
The first one is the $N$-queens problem, that is, given an $N\times N$ chessboard, place $N$ queens such that no two queens attack each other.
We assign a variable to each square in the chessboard, and consider a Boolean function that evaluates $\mathtrue$ iff the $\mathtrue$ variables constitute one answer for this problem.
This problem is used as a benchmark in Zero-suppressed BDD and other DD studies~\cite{minato93zdd,bryant18chain}.
The second one is enumerating matchings of grid graphs.
Subgraph enumeration with decision diagrams has several applications; see \cite{knuth11TACP} and \cite{nishino17graph}.
Here the grid graph is often used as a benchmark~\cite{iwashita13SAP}, because it is closely related to self-avoiding walk~\cite{madras96SAW}, and subgraph enumeration becomes much harder for larger grids despite their simplicity.
We can observe that both the chessboard and the grid have line symmetries and point symmetry. Again we exploit dynamic vtree search implemented in the SDD package.

\setlength{\tabcolsep}{0.5mm}
\begin{table}
	\centering
	\caption{Results for experiments. The ``S'' column represents SDD size, ``V'' represents VS-SDD size, and ``ratio'' indicates the ratio of VS-SDD size compared to SDD size.}
	\label{tb:eval2}
	\fontsize{7.5pt}{6pt}\selectfont
	\begin{tabular}{cr|rrr}
		\hline
		Problem & \multicolumn{1}{c|}{\#vars} & \multicolumn{1}{c}{S} & \multicolumn{1}{c}{\textbf{V}} & \multicolumn{1}{c}{ratio}\\
		\hline
		blocks-2\textunderscore t3 & 248 & 8811 & 7057 & 80.1\% \\
		blocks-2\textunderscore t5 & 406 & 31861 & 28858 & 90.6\% \\
		\hline
		bomb-5-1\textunderscore t3 & 348 & 3798 & 2278 & 60.0\% \\
		bomb-5-1\textunderscore t5 & 564 & 6327 & 3960 & 62.6\% \\
		bomb-5-1\textunderscore t7 & 780 & 11212 & 7287 & 65.0\% \\
		bomb-5-1\textunderscore t10 & 1104 & 16514 & 10426 & 63.1\% \\
		\hline
		comm-5-2\textunderscore t3 & 488 & 20584 & 18033 & 87.6\% \\
		\hline
		emptyroom-4\textunderscore t3 & 116 & 1822 & 1146 & 62.9\% \\
		emptyroom-4\textunderscore t5 & 188 & 3090 & 1885 & 61.0\% \\
		emptyroom-4\textunderscore t7 & 260 & 5073 & 3001 & 59.2\% \\
		emptyroom-4\textunderscore t10 & 368 & 106737 & 103417 & 96.8\% \\
		\hline
		emptyroom-8\textunderscore t3 & 244 & 10511 & 8549 & 81.3\% \\
		\hline
		safe-5\textunderscore t3 & 54 & 567 & 441 & 77.8\% \\
		safe-5\textunderscore t5 & 86 & 898 & 640 & 71.2\% \\
		safe-5\textunderscore t7 & 118 & 1710 & 1314 & 76.8\% \\
		safe-5\textunderscore t10 & 166 & 2506 & 1756 & 70.1\% \\
		\hline
		safe-30\textunderscore t3 & 304 & 5476 & 4067 & 74.3\% \\
		safe-30\textunderscore t5 & 486 & 8710 & 6328 & 72.7\% \\
		safe-30\textunderscore t7 & 668 & 14449 & 10371 & 71.8\% \\
		safe-30\textunderscore t10 & 941 & 23469 & 17421 & 74.2\% \\
		\hline
		8-Queens & 64 & 2222 & 1624 & 73.1\% \\
		9-Queens & 81 & 5559 & 4767 & 85.8\% \\
		10-Queens & 100 & 10351 & 9159 & 88.5\% \\
		11-Queens & 121 & 30611 & 28876 & 94.3\% \\
		\hline
		Matching-6x6 & 60 & 13091 & 12671 & 96.8\% \\
		Matching-8x8 & 112 & 98200 & 97103 & 98.8\% \\
		Matching-6x18 & 192 & 36228 & 34241 & 94.5\% \\
		\hline
	\end{tabular}
\end{table}

Table \ref{tb:eval2} shows the results of our experiments.
The ``S'' column represents SDD size, ``V'' represents VS-SDD size, and ``ratio'' indicates the ratio of VS-SDD size compared to SDD size.
Here the problems in which the SDD compilation took more than 10 minutes are omitted.
For planning problems, the suffix ``\textunderscore t$n$'' stands for $T=n$, and for matching problems, the suffix indicates the grid size.
It is observed that for many planning problems, the VS-SDD reduces the size to around 60\% to 80\% of the original SDD.
We observe that for these cases, many nodes representing substitution-equivalent functions are found among the bottom nodes of the original SDD, which yields the substantial size decrease.
These compression ratios are competitive to, and for some cases better than, that of the Sym-DDG~\cite{bart14sym} compared to the DDG.
For the $N$-queens problems, still better compression ratios are achieved except for $N=11$.
However, for matching enumeration problems, the effect of variable shift is relatively small.
One reason is the asymmetry of primes and subs, that is, primes must form a partition while subs do not have such a limitation.
The success in planning datasets may be explained as follows.
The dynamic vtree search typically gathers variables with strong dependence locally to achieve succinctness.
For planning problems, the variables with near time points are gathered, which captures the symmetric nature of the problem.

\section{Conclusion}
\label{sec:conclusion}
We proposed a variable shift SDD (VS-SDD), a more succinct variant of SDD that is obtained by changing the way in which respecting vtree nodes are indicated.
VS-SDD keeps the two important properties of SDDs, the canonicity and the support of many useful operations.
The size of a VS-SDD is always smaller than or equal to that of an SDD, and there are cases where the VS-SDD is exponentially smaller than the SDD.
Experiments show that our idea effectively captures the symmetries of Boolean functions, which leads to succinct compilation.

\bibliography{references}

\appendix

\section{Appendix: Detailed Proofs}
\label{appx:proofs}

\begin{proof}[Proof of Theorem \ref{thm:ratio}]
	The first part can be proved by the following general claim.
	\begin{claim}
		\label{lem:SDDlower}
		Let $f(\mathbf{X})$ be a Boolean function such that for any variable $X_i$ in $\mathbf{X}$, the conditioned functions $f|_{X_i=\mathtrue}$ and $f|_{X_i=\mathfalse}$ are different.
		Then any SDD representing $f$ has size $\morder{M}$, where $M$ is the number of variables in $\mathbf{X}$.
		This holds for any vtree.
	\end{claim}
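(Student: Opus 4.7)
The hypothesis on $f$ is precisely the statement that $f$ essentially depends on every variable in $\mathbf{X}$. The key observation I would make is that in any SDD $\alpha$ representing $f$, for each variable $X_i$ at least one literal ($X_i$ or $\neg X_i$) must appear as a leaf in the DAG of $\alpha$: otherwise a straightforward structural induction on $\alpha$ would show that $\langle \alpha \rangle$ is independent of $X_i$, contradicting essential dependence on $X_i$.

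Given this, I would finish by a simple double-counting argument on child pointers. Every element $(p, s)$ appearing in a decomposition of $\alpha$ contributes exactly two outgoing child pointers (one to its prime, one to its sub), so the total number of child pointers in $\alpha$ equals $2 |\alpha|$, since $|\alpha|$ is defined as the total number of elements across all decompositions. On the other hand, each of the at least $M$ distinct literal leaves of $\alpha$ must receive at least one incoming child pointer; otherwise the literal is unreachable from the root and is not actually part of $\alpha$. (For $M \geq 2$, the root itself cannot be a literal, since a literal SDD depends on only one variable.) Therefore $2 |\alpha| \geq M$, which yields $|\alpha| \geq M/2 = \Omega(M)$. The argument uses no property of the vtree, so the bound holds for any vtree, as required.

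The only step requiring genuine care is the structural induction behind the first observation: if no literal of $X_i$ appears anywhere in $\alpha$, then $\langle \alpha \rangle|_{X_i=\mathtrue} = \langle \alpha \rangle|_{X_i=\mathfalse}$. The cases where $\alpha$ is a constant or a literal over a variable other than $X_i$ are immediate; for a decomposition node $\alpha = \{(p_j, s_j)\}_j$, conditioning commutes through $\bigvee_j \langle p_j \rangle \wedge \langle s_j \rangle$, so applying the inductive hypothesis to each prime and sub closes the argument. I do not expect this step to present any real obstacle, and the rest of the proof is pure counting.
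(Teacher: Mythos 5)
Your proposal is correct and follows essentially the same route as the paper's proof: both establish that for each variable $X_i$ at least one of the literals $X_i$, $\neg X_i$ must occur in the SDD (else conditioning on $X_i$ would not change the function, contradicting the hypothesis), and both then count that $M$ distinct literal nodes force at least $\lceil M/2 \rceil$ prime--sub elements. Your explicit double-counting of child pointers is just a slightly more careful phrasing of the paper's same final step.
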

	\begin{claimproof}
		We prove that for any variable $X_i$ in $\mathbf{X}$, the SDD of $f$ contains at least either $X_i$ or $\neg X_i$ (as a literal SDD).
		If this holds, there are at least $m$ literals in the SDD of $f$ and thus at least $\lceil M/2\rceil$ prime-sub pairs, which suggests that its size is at least $\lceil M/2\rceil=\morder{M}$.
		
		Suppose the SDD of $f$ does not contain $X_i$ and $\neg X_i$.
		Recall the definition (semantics) of SDD.
		By recursively applying the definition of $\langle\cdot\rangle$, we obtain an expression of $f$ by using conjunctions, disjunctions, and literals.
		If the SDD does not contain $X_i$ and $\neg X_i$, this expression also does not contain $X_i$.
		This means that $f|_{X_i=\mathtrue}$ and $f|_{X_i=\mathfalse}$ are equivalent, since the assignment of $X_i$ is not mentioned in $f$.
		This contradicts the condition, thus the SDD contains at least one of $X_i$ and $\neg X_i$.
	\end{claimproof}

  Now we refer to the second part.
  We use the vtree $v_j(\mathbf{X})$ explained in the main article and drawn in Fig.~\ref{fig:recursive}(b).
  We give a proof by inductively showing that functions $f_j(\mathbf{X}), \neg f_j(\mathbf{X})$,  $f'_j(\mathbf{X})$ and $\neg f'_j(\mathbf{X})$ can be represented with $\order j$ size VS-SDDs, where $f'_j(\mathbf{X}) = \neg X_1 \wedge \neg X_2 \wedge f_j(\mathbf{X})$.

If $j \leq 2$, then the size  of VS-SDDs representing $f_j(\mathbf{X}), \neg f_j(\mathbf{X}), f'_j(\mathbf{X})$ and $\neg f'_j(\mathbf{X})$ are constant.
If $j \geq 3$, then the $(\{X_1\} \cup \mathbf{Y})$-partition of $f_j(\mathbf{X})$ 
 defined with vtree node $v_j(\mathbf{X})$ becomes 
		\fontsize{9pt}{0pt}\selectfont
		\begin{align*}
			& \Bigl(\bigl(X_1\wedge f'_{j-1}(\mathbf{Y})\bigr) \wedge \bigl(\neg X_2\wedge f_{j-1}(\mathbf{Z})\bigr)\Bigr)\vee \\
			& \Bigl(\bigl(\neg X_1\wedge f_{j-1}(\mathbf{Y})\bigr) \wedge \bigl((X_2\wedge f'_{j-1}(\mathbf{Z}))\vee (\neg X_2\wedge f_{j-1}(\mathbf{Z}))\bigr)\Bigr)\vee \\
			& \Bigl(\bigl((X_1\wedge\neg f'_{j-1}(\mathbf{Y}))\vee (\neg X_1\wedge\neg f_{j-1}(\mathbf{Y}))\bigr) \wedge\mathfalse\Bigr).
		\end{align*}
        \normalsize
The prime of the first element $X_1\wedge f'_{j-1}(\mathbf{Y})$ becomes $\mathtrue$ iff
the corresponding edges form a matching in the left half of the binary tree having edges $X_1, X_3, X_4, \ldots$ and $X_1 = \mathtrue$. The prime of the second element $\neg X_1 \wedge f_{j-1}(\mathbf{Y})$ becomes true
 iff  the selected edges form a matching in the left half tree and $X_1 = \mathfalse$. The prime of the third element
 becomes $\mathtrue$ iff the selected edges do not form a matching in the left half tree.
The above  partition is compressed since $f_{j-1}(\mathbf{Z})\neq\mathfalse$ and $f'_{j-1}(\mathbf{Z})\neq\mathfalse$.
		If we were to depict this decomposition as an SDD, it would look like Fig.~\ref{fig:recursive2}.
		The point is that pairs of $f_{j-1}(\mathbf{Y})$ and $f_{j-1}(\mathbf{Z})$, and $f'_{j-1}(\mathbf{Y})$ and $f'_{j-1}(\mathbf{Z})$ are substitution-equivalent  with $\pi_{v_{j-1}(\mathbf{Y}),v_{j-1}(\mathbf{Z})}$, and thus the VS-SDD representation prepares only one node for $f_{j-1}$ and $f'_{j-1}$ (see Fig.~\ref{fig:recursive2}).
    Therefore, the size of the VS-SDD representing $f_{j}(\mathbf{X})$ equals the sum of sizes of VS-SDDs representing $f_{j-1}(\mathbf{Y})$, $f'_{j-1}(\mathbf{Y})$, and $\neg f'_{j-1}(\mathbf{Y})$ plus a constant.
		Similarly, $f'_j(\mathbf{X})$ can be decomposed as 
		\fontsize{9pt}{0pt}\selectfont
		\begin{align*}
			\Bigl(\bigl(\neg X_1\wedge f_{j-1}(\mathbf{Y})\bigr)\wedge\bigl(\neg X_2\wedge f_{j-1}(\mathbf{Z})\bigr)\Bigr)\vee\Bigl(\bigl((X_1\wedge\mathtrue)\vee(\neg X_1\wedge\neg f_{j-1}(\mathbf{Y}))\bigr)\wedge\mathfalse\Bigr).
		\end{align*}
		\normalsize
		Here $f_{j-1}(\mathbf{Y})$ and $f_{j-1}(\mathbf{Z})$ are substitution-equivalent with $\pi_{v_{j-1}(\mathbf{Y}),v_{j-1}(\mathbf{Z})}$ and thus the size of a VS-SDD representing 
		$f'_{i}(\mathbf{X})$ equals the size of the VS-SDD representing $f_{j-1}(\mathbf{Y})$ plus a constant.
        The $(\{X_1\} \cup \mathbf{Y})$-partitions of $\neg f_j(\mathbf{X})$ and $\neg f'_i(\mathbf{X})$ are represented in almost the same way. Therefore, from the inductive hypothesis, VS-SDDs representing $f_j(\mathbf{X}), f'_j(\mathbf{X}), \neg f_j(\mathbf{X})$ and $\neg f'_j(\mathbf{X})$ have $\order j$ size.
\end{proof}

\begin{figure}[tbp]
	\centering
	\includegraphics[keepaspectratio,scale=0.7]{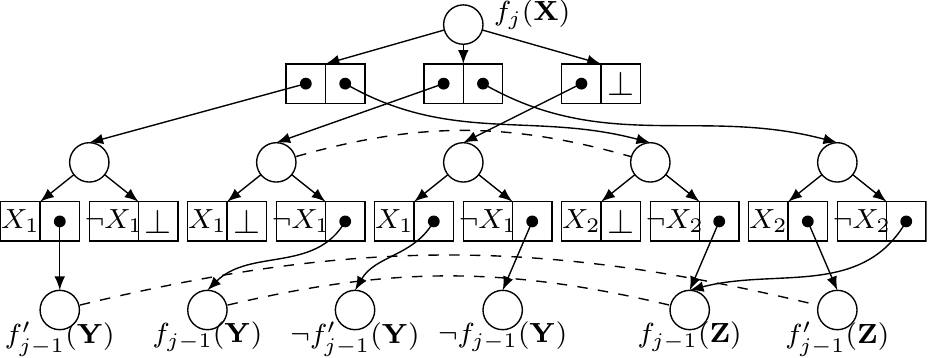}
	\caption{The recursive structure of the compressed SDD of $f_j(\mathbf{X})$ respecting $v_j(\mathbf{X})$ in Fig.~\ref{fig:recursive}(b). Dashed lines indicate that the nodes on both ends represents substitution-equivalent functions. Here $X_1$ and $X_2$ indicate the first and second (resp.) variables of $\mathbf{X}$.}
	\label{fig:recursive2}
\end{figure}

\begin{proof}[Proof of Prop.~\ref{prop:query}]
	First, we consider the queries in Table \ref{tb:query}(a).
	The first concern is the polytime solvability of model counting, i.e.~\textbf{CT}.
	The \emph{model count} of Boolean function $f$ is the number of satisfying assignments of $f$.
	Model counting, also known as \#SAT, is applicable to wider research areas, e.g.~network reliability estimation~\cite{duenas17count}.
	For SDD $\alpha$, model count can be performed with  $O(|\alpha|)$ time dynamic programming.
	Similarly, we can show that the model count of the function represented by a VS-SDD $(\alpha, k)$ can be computed by dynamic programming that runs in $\order{|\alpha|}$ time.
    The pseudocode for model counting with VS-SDD is shown in Alg.~\ref{alg:count}.
	The key is that substitution-equivalent Boolean functions have the same model count.

\begin{algorithm}[tbp]
	\caption{$\mathtt{Count}(\alpha,k)$, which computes the model count of the Boolean function $\langle\alpha,k\rangle$.}
	\label{alg:count}
	\fontsize{9pt}{7pt}\selectfont
	\begin{algorithmic}[1]
		\REQUIRE A decomposition VS-SDD $(\alpha,k)$.
		\ENSURE The model count of the Boolean function $\langle\alpha,k\rangle$.
		\IF{$\mathtt{Cache}(\alpha)\neq\mathtt{nil}$}
			\RETURN $\mathtt{Cache}(\alpha)$
		\ELSE
			\STATE $w\leftarrow\idtov(k)$
			\STATE $r\leftarrow 0$
			\FORALL{elements $([p_i,d_i],[s_i,e_i])$ in $\alpha$}
				\STATE \textbf{if} $p_i=\bot$ \textbf{then} $n_p\leftarrow 0$
				\STATE \textbf{else if} $p_i=\top$ \textbf{then} $n_p\leftarrow 2^{L(w^l)}$ \COMMENT{$w^l$ is the left child of $w$}
				\STATE \textbf{else if} $p_i\in\{\apos,\aneg\}$ \textbf{then} $n_p\leftarrow 2^{L(w^l)-1}$
				\STATE \textbf{else} $n_p\leftarrow\mathtt{Count}(p_i,k+d_i)\cdot 2^{L(w^l)-L(\idtov(k+d_i))}$
				\STATE \textbf{if} $s_i=\bot$ \textbf{then} $n_s\leftarrow 0$
				\STATE \textbf{else if} $s_i=\top$ \textbf{then} $n_s\leftarrow 2^{L(w^r)}$ \COMMENT{$w^r$ is the right child of $w$}
				\STATE \textbf{else if} $s_i\in\{\apos,\aneg\}$ \textbf{then} $n_s\leftarrow 2^{L(w^r)-1}$
				\STATE \textbf{else} $n_s\leftarrow\mathtt{Count}(s_i,k+e_i)\cdot 2^{L(w^r)-L(\idtov(k+e_i))}$
				\STATE $r\leftarrow r+n_pn_s$
			\ENDFOR
			\RETURN $\mathtt{Cache}(\alpha)\leftarrow r$
		\ENDIF
	\end{algorithmic}
	\normalsize
\end{algorithm}

	\textbf{ME} can also be solved by an algorithm similar to \textbf{CT}.
	For \textbf{SE}, we are given two VS-SDDs $(\alpha, k)$ and $(\beta, \ell)$, and check whether $\langle\alpha,k \rangle$ implies $\langle\beta,\ell\rangle$ or not.
	This can be solved by the algorithm shown in \cite{pipat08struct}.
	That is, we take the conjunction $\langle\alpha, k\rangle\wedge\langle\beta, \ell\rangle$, and then perform model counting.
	If the model count of this conjunction equals that of $\langle\alpha,k\rangle$, we can say $\langle\alpha,k\rangle$ implies $\langle\beta,\ell\rangle$.
	Since conjunction and model counting can be performed in polytime for VS-SDDs, \textbf{SE} can also be solved in polytime.
	Note that even for compressed VS-SDDs that do not support $\mathbf{\wedge}$\textbf{BC}, the procedure described above can be performed in polytime because during this procedure the conjunction VS-SDD is not obliged to be compressed, since it is only used for model counting.
	\textbf{EQ}, \textbf{CO}, \textbf{VA}, \textbf{IM}, and \textbf{CE} can be solved by \textbf{SE}.

	We next consider the transformations in Table \ref{tb:query}(b).
	First, the negation $\mathbf{\neg}$\textbf{C} of a VS-SDD $(\alpha, k)$ can be computed by taking exclusive-or with $\top$, which can be done in $\order{|\alpha|\cdot 1}=\order{|\alpha|}$ time and thus VS-SDDs support polytime $\mathbf{\neg}$\textbf{C}.
	Note that this procedure produces a compressed VS-SDD if $\alpha$ is also compressed since if $s_i\neq s_j$ for all $i\neq j$, $\neg s_i\neq\neg s_j$ for all $i\neq j$.
	Therefore compressed VS-SDDs also support polytime $\mathbf{\neg}$\textbf{C}.
	
	The negative results for VS-SDDs and compressed VS-SDDs in Table \ref{tb:query}(b) can be proved in a similar manner as those of SDDs and compressed SDDs in \cite{broeck15comp}.
	Thereafter, we focus on uncompressed VS-SDDs.

	Positive results for $\mathbf{\wedge}$\textbf{BC} and $\mathbf{\vee}$\textbf{BC} are exactly as stated in Prop.~\ref{prop:apply}.
	For \textbf{CD}, given VS-SDD $(\alpha, k)$ and term $S$ (a conjunction of literals), we return a VS-SDD representing $\langle\alpha, k\rangle|_S$, where $f|_S$ is the Boolean function obtained by replacing each occurrence of $X_i$ in $f$ with $\mathtrue$ if $S$ contains $X_i$, or with $\mathfalse$ if $S$ contains $\neg X_i$.
	We follow the procedure for conditioning in an uncompressed SDD as detailed in the full version of \cite{broeck15comp}.

	Conditioning may unpack a VS-SDD node to at most $(|S|+1)$ nodes if we apply the identical vtree rule, and we can perform conditioning in $\order{|S||\alpha|}$ time, which is still polynomial with regard to $|S|$ and $|\alpha|$.
	Thus VS-SDDs support polytime \textbf{CD}.
    The pseudocode for conditioning on VS-SDD is written in Alg.~\ref{alg:cond}.
	The support for \textbf{SFO} follows from the support for \textbf{CD} and $\mathbf{\vee}$\textbf{BC}.
\begin{algorithm}[tbp]
	\caption{$\mathtt{Cond}(\alpha,k,S)$, which performs a conditioning of $(\alpha,k)$ with the literals in $S$.}
	\label{alg:cond}
	\fontsize{9pt}{7pt}\selectfont
	\begin{algorithmic}[1]
		\REQUIRE A VS-SDD $(\alpha,k)$, and a set of literals $S$.
		\ENSURE A VS-SDD structure of the Boolean function obtained by conditioning $\langle\alpha,k_\alpha\rangle$ with the literals in $S$.
		\IF{$v(\idtov(k))$ does not contain variables appearing in $S$}
			\RETURN $\alpha$
		\ELSIF{$\mathtt{Cache}(\alpha,k)\neq\mathtt{nil}$}
			\RETURN $\mathtt{Cache}(\alpha,k)$
		\ELSE
			\STATE \textbf{if} $\alpha\in\{\top,\bot\}$ \textbf{then} \textbf{return} $\alpha$
			\STATE \textbf{else if} ($\alpha=\apos$ \textbf{and} $l(\idtov(k))\in S$) \textbf{or} ($\alpha=\aneg$ \textbf{and} $\neg l(\idtov(k))\in S$) \textbf{then} \textbf{return} $\top$
			\STATE \textbf{else if} ($\alpha=\apos$ \textbf{and} $\neg l(\idtov(k))\in S$) \textbf{or} ($\alpha=\aneg$ \textbf{and} $l(\idtov(k))\in S$) \textbf{then} \textbf{return} $\bot$
			\STATE $\gamma\leftarrow\{\}$
			\FORALL{elements $([p_i,d_{p_i}],[s_i,d_{s_i}])$ in $\alpha$}
				\STATE add element $([\mathtt{Cond}(p_i,k+d_{p_i},S),d_{p_i}],[\mathtt{Cond}(s_i,k+d_{s_i},S),d_{s_i}])$ to $\gamma$
			\ENDFOR
			\IF{$\mathtt{UniqTable}(e(\idtov(k)), \gamma)=\mathtt{nil}$}
				\STATE $\mathtt{UniqTable}(e(\idtov(k))), \gamma)\leftarrow \mathtt{CreateNewNode}(\gamma)$
			\ENDIF
			\RETURN $\mathtt{Cache}(\alpha,k)\leftarrow\mathtt{UniqTable}(e(\idtov(k)), \gamma)$
		\ENDIF
	\end{algorithmic}
	\normalsize
\end{algorithm}
\end{proof}

\section{Appendix: The Apply Operation for Trimmed VS-SDDs}
\label{appx:trimmedapply}
In this appendix, we detail the $\mathapply$ operation for trimmed VS-SDDs.
The simplicity of the $\mathapply$ for normalized VS-SDDs is due to the fact that we can assume that the offsets of two VS-SDDs are always equal.
For trimmed VS-SDDs, this assumption does not hold and thus we should consider the case that the offsets of two VS-SDDs are different.
Now the $\mathapply$ operation takes five arguments $\mathapply(\alpha,\beta,k_\alpha,k_\beta,\circ)$ to compute the VS-SDD of $\langle\alpha,k_\alpha\rangle\circ\langle\beta,k_\beta\rangle$.

To deal with the case $k_\alpha\neq k_\beta$, we should consider the \emph{expansion} at vtree node $w$.
Let $\mathbf{Z}$ and $\mathbf{W}$ be the variables corresponding to the left and right (resp.) descendant leaves of $w$.
Then, we can make the $\mathbf{Z}$-partition of the function $H_l(\mathbf{Z},\mathbf{W}):=h_l(\mathbf{Z})$ as $[h_l(\mathbf{Z})\wedge\mathtrue]\vee[(\neg h_l(\mathbf{Z}))\wedge\mathfalse]$ (called \emph{left expansion}).
We can also make the $\mathbf{Z}$-partition of the function $H_r(\mathbf{Z},\mathbf{W}):=h_r(\mathbf{W})$ as $[\mathtrue\wedge h_r(\mathbf{W})]$ (called \emph{right expansion}).
By following this, we can form a decomposition of a VS-SDD $(\alpha,k_\alpha)$ at the ancestor vtree node of $\idtov(k_\alpha)$ ($\beta$ can also be handled in the same way).
If $k_\alpha\neq k_\beta$, we expand either or both of $(\alpha,k_\alpha)$ and $(\beta,k_\beta)$ at the \emph{lowest common ancestor} (LCA) node of $\idtov(k_\alpha)$ and $\idtov(k_\beta)$ to make $\mathbf{X}$-partition with the same $\mathbf{X}$.
Note that for SDDs, $\mathapply$ is also based on the following mechanism.

The $\mathapply(\alpha,\beta,k_\alpha,k_\beta,\circ)$ procedure can be classified into five cases depending on the relation of $k_\alpha$ and $k_\beta$.
The full pseudocode for the VS-SDDs' $\mathapply$ is given in Alg.~\ref{alg:apply}.
Here for the simplicity, the offsets of constants are considered as $0$, and $\idtov(0)$ is considered as a right descendant of any other vtree node.
Note that the cases that $(\alpha,k_\alpha)$ and $(\beta,k_\beta)$ are exchanged can be handled in the same manner.
Let $w$ be the LCA of $\idtov(k_\alpha)$ and $\idtov(k_\beta)$.
Then the returned offset is $\vtoid(w)$, unless otherwise specified.
\begin{enumerate}
	\renewcommand{\theenumi}{(\arabic{enumi})}
	\renewcommand{\labelenumi}{(\arabic{enumi})}
	\item If both $(\alpha,k_\alpha)$ and $(\beta,k_\beta)$ are constants, either one is a constant and the other is a literal, or both are literals with $k_\alpha=k_\beta$, then the returned VS-SDD structure is $\alpha\circ\beta$, which is either constant or literal. For example, $\top\wedge\apos=\apos$ and $\apos\oplus\apos=\bot$. The returned offset is $0$ if $\alpha\circ\beta$ is a constant, and $\max\{k_\alpha,k_\beta\}$ otherwise.
	\item If $k_\alpha=k_\beta$, then $\vtoid(w)=k_\alpha=k_\beta$. Let $(\lambda_{i,j},k_{\lambda_{i,j}})$ be the returned VS-SDD of $\mathapply(p_i,q_j,k_\alpha+d_{p_i},k_\beta+d_{q_j},\wedge)$ and $(\mu_{i,j},k_{\mu_{i,j}})$ be that of $\mathapply(s_i,r_j,k_\alpha+d_{s_i},k_\beta+d_{r_j},\circ)$, where $([p_i,d_{p_i}],[s_i,d_{s_i}])\in\alpha$ and $([q_j,d_{q_j}],[r_j,d_{r_j}])\in\beta$.
	The resultant VS-SDD structure is $\{([\lambda_{i,j},\max\{k_{\lambda_{i,j}}-\vtoid(w),0\}],[\mu_{i,j},\max\{k_{\mu_{i,j}}-\vtoid(w),0\}])\mid i,j\}$.
	Note that $\max\{\cdot,0\}$ deals with the case that $\lambda_{i,j}$ or $\mu_{i,j}$ is a constant.
	\item If $\idtov(k_\alpha)$ is a left descendant of $\idtov(k_\beta)$, then $\vtoid(w)=k_\beta$.
	Here $\alpha$ is \emph{left expanded} to $\alpha'=\{([\alpha,0],[\top,0]),([\neg\alpha,0],[\bot,0])\}$, and the same computation as case (2) runs except that $\alpha$ is replaced with $\alpha'$.
	\item If $\idtov(k_\alpha)$ is a right descendant of $\idtov(k_\beta)$, then $\vtoid(w)=k_\beta$. 
	Here $\alpha$ is \emph{right expanded} to $\alpha'=\{([\top,0],[\alpha,0])\}$, and the same computation as case (2) runs except that $\alpha$ is replaced with $\alpha'$.
	\item If $\idtov(k_\alpha)$ and $\idtov(k_\beta)$ are left and right descendants of $w$ (resp.),
	$\alpha$ is left expanded to $\alpha'$, $\beta$ is right expanded to $\beta'$, and the same computation as case (2) runs except that $\alpha$ and $\beta$ are replaced with $\alpha'$ and $\beta'$ (resp.)
\end{enumerate}

Here we analyze the time complexity of the $\mathapply$ algorithm.
Now for each $(\gamma,\delta,k_\gamma,k_\delta,\circ)$, the cost of the $\mathapply$ call other than the recursion is bounded by $\order{t_\gamma t_\delta}$ where $t_\gamma$ and $t_\delta$ are the decomposition sizes of $\gamma$ and $\delta$ (resp.), and there are at most $\order{M}$ candidates for each of the offsets $k_\gamma,k_\delta$.
Here cases (3) and (5) must deal with the negation of a VS-SDD node, but this only increases the complexity by a constant factor; the details are described later.
However, it seems that the overall cost of $\mathapply(\alpha,\beta,k_\alpha,k_\beta,\circ)$ can only be bounded by $\order{\sum_{\gamma\in\alpha}\sum_{\delta\in\beta}t_\gamma t_\delta}\cdot\order{M^2}=\order{M^2|\alpha||\beta|}$, which is not a polytime of $|\alpha|$ and $|\beta|$.
Note that the $\mathapply$ described above needs LCA indexing~\cite{harel84LCA} of a vtree, which needs $\order{M}$ preprocessing time where $M$ is the number of variables\footnote{The $\mathapply$ of (trimmed) SDDs also needs such LCA indexing. Typically, the same vtree is repetitively used many times, and so such LCA indexing is considered to be just a preprocessing step for SDDs' $\mathapply$ in \cite{darwiche11sdd}. Therefore, we also consider this cost as a preprocessing step for VS-SDDs' $\mathapply$.}.

However, we can omit some ``isomorphic'' computations with VS-SDDs, as described in the main article before Prop.~\ref{prop:apply}.
More formally, the key is the following lemma.
From now, for two vtree nodes $u, w$ in the vtree $v$, $u\sim w$ means that the subtree rooted at $u$ and that rooted at $w$ are isomorphic.

\begin{lemma}
	\label{lem:applycong}
	Given vtree $v$, and two VS-SDD structures $\alpha,\beta$, we consider nodes $\gamma\in\alpha$ and $\delta\in\beta$.
	Let $k_\gamma, k'_\gamma$ be the possible offsets of $\gamma$ and $k_\delta,k'_\delta$ be those of $\delta$.
	Let $w$ be the LCA of $\idtov(k_\gamma)$ and $\idtov(k_\delta)$ and $w'$ be that of $\idtov(k'_\gamma)$ and $\idtov(k'_\delta)$.
	Then if (I) $\gamma\in\{\top,\bot\}$ or $k_\gamma-\vtoid(w)=k'_\gamma-\vtoid(w')$, (II) $\delta\in\{\top,\bot\}$ or $k_\delta-\vtoid(w)=k'_\delta-\vtoid(w')$, and (III) $w\sim w'$, $\mathapply(\gamma,\delta,k_\gamma,k_\delta,\ldots)$ and $\mathapply(\gamma,\delta,k'_\gamma,k'_\delta,\ldots)$ result in an identical structure.
\end{lemma}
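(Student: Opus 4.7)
The plan is to proceed by well-founded induction following the recursive calls of $\mathapply$: each recursion either strictly decreases the vtree height of the LCA $w$ (in the expansion cases) or descends to proper substructures of $\gamma$ and $\delta$ (when $k_\gamma = k_\delta$), so the induction is sound. The central observation is that the case analysis in Algorithm \ref{alg:apply}---whether it dispatches to case (1), (2), (3), (4), or (5)---is determined solely by the structural relationship between $\idtov(k_\gamma)$ and $\idtov(k_\delta)$ with respect to the LCA $w$ (constant/literal, coincident with $w$, or a left- or right-descendant). Conditions (I) and (II) fix these positions relative to $\vtoid(w)$, and condition (III) supplies the isomorphism $w \sim w'$; since preorder IDs are assigned consistently within isomorphic subtrees, the corresponding relative positions of $\idtov(k'_\gamma), \idtov(k'_\delta)$ under $w'$ agree with those under $w$. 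Hence the two invocations dispatch to the same case.

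The base case is case (1), where the returned structure is a constant or literal depending only on $\gamma$, $\delta$, and $\circ$; both calls therefore produce identical output. For the inductive cases (2)--(5), the output is assembled by iterating over element pairs drawn from $\gamma$ and $\delta$ (possibly after a deterministic left/right expansion into a fixed dummy decomposition) and recursing on the primes and subs. To invoke the inductive hypothesis, I would verify that conditions (I)--(III) are inherited by each recursive call: letting $w_p$ denote the new LCA for the prime recursion and $w'_p$ its primed counterpart, the increments $d_{p_i}, d_{q_j}$ appearing in the recursive offsets come from the shared structures $\gamma, \delta$ and are therefore identical in both calls, and since preorder IDs transport descendants bijectively under $w \sim w'$, one obtains $w_p \sim w'_p$ together with the matching relative-offset equalities required by (I)--(III). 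The analogous argument handles the sub recursion.

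Once the inductive hypothesis yields $\lambda_{i,j} = \lambda'_{i,j}$ and $\mu_{i,j} = \mu'_{i,j}$ for every surviving pair $(i,j)$, the decomposition that $\mathapply$ returns consists of the elements $([\lambda_{i,j}, \max\{k_{\lambda_{i,j}} - \vtoid(w), 0\}], [\mu_{i,j}, \max\{k_{\mu_{i,j}} - \vtoid(w), 0\}])$; the inner offset differences agree with their primed counterparts by the inductive hypothesis together with condition (III), so the two decompositions are literally the same set of elements, and the unique-node table returns the same VS-SDD structure.

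The main obstacle will be the careful bookkeeping for the expansion cases (3)--(5), where $\gamma$ or $\delta$ must be padded with trivial primes or subs whose offsets are nominally $0$, together with the boundary handling performed by the $\max\{\cdot, 0\}$ clause when a recursive call returns a constant or literal rather than a decomposition. Both issues reduce to a finite case split on the type of the returned substructure, but writing them out requires matching each of the five dispatch branches with the five possible return shapes of the recursive calls.
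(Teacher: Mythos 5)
Your plan is correct and follows essentially the same route as the paper's proof: induction along the recursion of $\mathapply$ (equivalently, on the depth of the LCA $w$, which is well-defined since $w\sim w'$ forces equal depths), with the key step being that conditions (I)--(III) are inherited by the recursive calls because the increments $d_{p_i},d_{q_j},d_{s_i},d_{r_j}$ are read off the shared structures $\gamma,\delta$ and the isomorphism $w\sim w'$ transports relative preorder positions, so the assembled element sets coincide and \texttt{UniqTable} returns the same node. The bookkeeping you flag for the expansion cases (3)--(5) and the $\max\{\cdot,0\}$ clause is exactly what the paper also treats only briefly (``in almost the same way''), so nothing essential is missing.
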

\begin{proof}
	The proof is by the induction of the depths of $w$ and $w'$ in $v$ (note that since $w\sim w'$, $w$ and $w'$ have the same depth).
	The base case is that $w$ and $w'$ are both leaves or $\vtoid(w)=\vtoid(w')=0$, which corresponds to case (1) and thus holds trivially.
	
	The step case is that $w$ and $w'$ are internal nodes.
	First, we deal with the case $k_\gamma=k_\delta=\idtov(w)$, which corresponds to case (2).
	Then from conditions (I) and (II), $k'_\gamma=k'_\delta=\idtov(w')$, which also corresponds to case (2).
	Let $(\lambda_{i,j},k_{\lambda_{i,j}})=\mathapply(p_i,q_j,k_\gamma+d_{p_i},k_\delta+d_{q_j},\wedge)$, $(\mu_{i,j},k_{\mu_{i,j}})=\mathapply(s_i,r_j,k_\gamma+d_{s_i},k_\delta+d_{r_j},\circ)$ and let $(\lambda'_{i,j},k_{\lambda'_{i,j}})=\mathapply(p_i,q_j,k'_\gamma+d_{p_i},k'_\delta+d_{q_j},\wedge)$, $(\mu'_{i,j},k_{\mu'_{i,j}})=\mathapply(s_i,r_j,k'_\gamma+d_{s_i},k'_\delta+d_{r_j},\circ)$, where $([p_i,d_{p_i}],[s_i,d_{s_i}])\in\gamma$, $([q_j,d_{q_j}],[r_j,d_{r_j}])\in\delta$.
	Then $\mathapply(\gamma,\delta,k_\gamma,k_\delta,\circ)$ and $\mathapply(\gamma,\delta,k'_\gamma,k'_\delta,\circ)$ are
	\begin{align*}
		& (\{([\lambda_{i,j},\max\{k_{\lambda_{i,j}}-\vtoid(w),0\}],[\mu_{i,j},\max\{k_{\mu_{i,j}}-\vtoid(w),0\}])\mid i,j\},\vtoid(w))\ \text{and}\\
		& (\{([\lambda'_{i,j},\max\{k_{\lambda'_{i,j}}-\vtoid(w'),0\}],[\mu'_{i,j},\max\{k_{\mu'_{i,j}}-\vtoid(w'),0\}])\mid i,j\},\vtoid(w')),
	\end{align*}
	respectively.
	Since $w\sim w'$ (this suggests the topologies of the subtrees rooted at $w$ and $w'$ are identical), $\idtov(k_{\lambda_{i,j}})=\mathtt{LCA}(\idtov(k_\alpha+d_{p_i}),\idtov(k_\beta+d_{q_j}))$ and $\idtov(k_{\lambda'_{i,j}})=\mathtt{LCA}(\idtov(k'_\alpha+d_{p_i}),\idtov(k'_\beta+d_{q_j}))$ satisfies $\idtov(k_{\lambda_{i,j}})\sim \idtov(k_{\lambda'_{i,j}})$ and $k_{\lambda_{i,j}}-\vtoid(w)=k_{\lambda'_{i,j}}-\vtoid(w')$.
	Thus we can use the induction hypothesis for $(\lambda_{i,j},k_{\lambda_{i,j}})$ and $(\lambda'_{i,j},k_{\lambda'_{i,j}})$: $\lambda_{i,j}$ and $\lambda'_{i,j}$ result in an identical structure.
	Similarly, $\mu_{i,j}$ and $\mu'_{i,j}$ result in an identical structure and $k_{\mu_{i,j}}-\vtoid(w)=k_{\mu'_{i,j}}-\vtoid(w')$.
	Therefore $\mathapply(\gamma,\delta,k_\gamma,k_\delta,\circ)$ and $\mathapply(\gamma,\delta,k'_\gamma,k'_\delta,\circ)$ also result in an identical structure.
	
	The other cases ((3), (4) and (5)) can be treated in almost the same way.
	Note that case (4) involves the case in which $\gamma,\gamma'$ are constants but $\delta,\delta'$ are non-constants.
	In this case, $w=\idtov(k_\delta)$, $w'=\idtov(k'_\delta)$ and consequently the same argument holds.
\end{proof}

Now the pseudocode of the $\mathapply$ of VS-SDDs can be written as Alg.~\ref{alg:apply}.
Here the Boolean variables $f_\alpha$ and $f_\beta$ are additionally included in the arguments of $\mathapply$, because in the left expansion (appearing in cases (3) and (5)) the negation of a VS-SDD node should be considered.
Here we stress that this only increases the computational cost by a constant factor, and thus the asymptotical complexity does not change.
Note that such handlings of negation should also be needed for the SDDs' $\mathapply$.

Now we explain the pseudocode.
Here for simplicity, the offset of the constant VS-SDDs is considered to be always $0$.
Lines 1--2 deal with constants and literals with negation flag; for them, the negation can be easily handled, e.g.~$(\alpha,f_\alpha)=(\aneg,\mathtrue)$ is converted into $(\apos,\mathfalse)$.
Lines 3--4 specify the offset of the constants to $0$.
Line 5 computes the LCA, $w$, of $\idtov(k_\alpha)$ and $\idtov(k_\beta)$, which can be computed in $\order{1}$ time with $\order{M}$ preprocessing for the vtree~\cite{harel84LCA}, and Lines 6--7 compute the differences of vtree node ids corresponding to conditions (I) and (II) in Lemma \ref{lem:applycong}.
Note that in Line 6, if $(\alpha,k_\alpha)$ is constant, i.e.~$k_\alpha=0$, then $e_\alpha=0$, and otherwise $e_\alpha=k_\alpha-\vtoid(w)$.
Lines 8--11 correspond to case (1).
For example, $\top\wedge\apos=\apos$ and $\apos\oplus\apos=\bot$.
Lines 12--13 are important: since when fixing $(\alpha,\beta,f_\alpha,f_\beta)$, if $e_\alpha$, $e_\beta$ and $e(w)=\argmin\{\vtoid(u)\mid u\sim w\}$ are equal, then the resultant structures are identical due to Lemma \ref{lem:applycong}, the computation cache is called, and if already computed the computed result is returned with offset $\vtoid(w)$.
If not computed, $(\alpha,k_\alpha)$ and $(\beta,k_\beta)$ are left or right expanded if needed (see Alg.~\ref{alg:expand}).
That is, if $\idtov(k_\alpha)$ is a left (resp.~right) descendant of $w$ then $(\alpha,k_\alpha)$ is left (resp.~right) expanded in Line 14.
The expansion of $\beta$ (Line 15) is conducted in the same way.
After that, the node representing $\langle\alpha,k_\alpha\rangle\circ\langle\beta,k_\beta\rangle$ is recursively computed in Lines 17--22.
Here the formula $\max\{\cdot,0\}$ in Line 22 deals with the case the computed $p$ (or $s$) is a constant.
In Line 20, it is checked if the computed prime $p$ is $\mathfalse$.
If $p=\mathfalse$, the corresponding sub is not computed since such $(p,s)$ pair does not constitute an $\mathbf{X}$-partition.
Such checking can be performed via the $\mathtt{Consistent}$ algorithm described in Alg.~\ref{alg:consistent}.
It runs in time linear to the size of its decomposition (other than the recursion), and thus with the power of cache ($\mathtt{Cache}(\cdot)$ in Alg.~\ref{alg:consistent}), the total cost of calling $\mathtt{Consistent}$ is bounded linear to the resultant structure of $\mathapply$, which does not incur an increase on the time complexity of $\mathapply$.
The hash $\mathtt{UniqTable}$ returns the output node if an identical substructure satisfying the identical vtree rule is already constructed.
If not yet constructed, the decomposition node with $\gamma$ is generated in Line 24.

\begin{algorithm}[tbp]
	\caption{$\mathtt{Apply}(\alpha,\beta,k_\alpha,k_\beta,f_\alpha,f_\beta,\circ)$, which computes a VS-SDD representing $\langle\alpha,k_\alpha\rangle\circ\langle\beta,k_\beta\rangle$ for two VS-SDDs $(\alpha,k_\alpha), (\beta,k_\beta)$ and a binary operator $\circ$.}
	\label{alg:apply}
	\fontsize{9pt}{7pt}\selectfont
	\begin{algorithmic}[1]
		\REQUIRE VS-SDDs $(\alpha,k_\alpha), (\beta,k_\beta)$, Boolean values $f_\alpha, f_\beta$, a binary operator $\circ$.
		\ENSURE A VS-SDD representing $\langle\alpha,k_\alpha\rangle\circ\langle\beta,k_\beta\rangle$. If $f_\alpha=\mathtrue$ ($f_\beta=\mathtrue$), $\langle\alpha,k_\alpha\rangle$ ($\langle\beta,k_\beta\rangle$) is replaced by $\neg\langle\alpha,k_\alpha\rangle$ ($\neg\langle\beta,k_\beta\rangle$). If the output VS-SDD structure is either $\top$ or $\bot$, the output offset is 0.
		\STATE \textbf{if} $\alpha\in\{\top,\bot,\mathbf{v},\neg\mathbf{v}\}$ \AND $f_\alpha=\mathtrue$ \textbf{then} $\alpha\leftarrow\neg\alpha$, $f_\alpha\leftarrow\mathfalse$
		\STATE \textbf{if} $\beta\in\{\top,\bot,\mathbf{v},\neg\mathbf{v}\}$ \AND $f_\beta=\mathtrue$ \textbf{then} $\beta\leftarrow\neg\beta$, $f_\beta\leftarrow\mathfalse$
		\STATE \textbf{if} $\alpha\in\{\top,\bot\}$ \textbf{then} $k_\alpha\leftarrow 0$
		\STATE \textbf{if} $\beta\in\{\top,\bot\}$ \textbf{then} $k_\beta\leftarrow 0$
		\STATE $w\leftarrow\mathtt{LCA}(\idtov(k_\alpha),\idtov(k_\beta))$
		\STATE $e_\alpha\leftarrow\max\{k_\alpha-\vtoid(w),0\}$
		\STATE $e_\beta\leftarrow\max\{k_\beta-\vtoid(w),0\}$
		\IF{$\alpha,\beta\in\{\top,\bot,\mathbf{v},\neg\mathbf{v}\}$ \AND ($k_\alpha=0$ \OR $k_\beta=0$ \OR $k_\alpha=k_\beta$)}
			\STATE $\gamma\leftarrow\alpha\circ\beta$ \COMMENT{This computation must result in $\top,\bot,\mathbf{v}$ or $\neg\mathbf{v}$}
			\STATE \textbf{if} $\gamma\in\{\top,\bot\}$ \textbf{then} \textbf{return} $(\gamma,0)$
			\STATE \textbf{else} \textbf{return} $(\gamma,\max\{k_\alpha,k_\beta\})$
		\ELSIF{$\mathtt{ConvTable}(\alpha,\beta,f_\alpha,f_\beta,e_\alpha,e_\beta,e(w),\circ)\neq\mathtt{nil}$}
			\RETURN $(\mathtt{ConvTable}(\alpha,\beta,f_\alpha,f_\beta,e_\alpha,e_\beta,e(w),\circ),\vtoid(w))$
		\ENDIF
		\STATE $\alpha'\leftarrow\mathtt{Expand}(\alpha,f_\alpha,k_\alpha,w)$
		\STATE $\beta'\leftarrow\mathtt{Expand}(\beta,f_\beta,k_\beta,w)$
		\STATE $\gamma\leftarrow\{\}$
		\FORALL{elements $([p_i,d_{p_i}],[s_i,d_{s_i}],f_{p_i},f_{s_i})$ in $\alpha'$}
			\FORALL{elements $([q_j,d_{q_j}],[r_j,d_{r_j}],f_{q_j},f_{r_j})$ in $\beta'$}
				\STATE $(p,k_p)\leftarrow\mathtt{Apply}(p_i,q_j,k_\alpha+d_{p_i},k_\beta+d_{q_j},f_{p_i},f_{q_j},\wedge)$
				\IF{$\mathtt{Consistent}(p)$}
					\STATE $(s,k_s)\leftarrow\mathtt{Apply}(s_i,r_j,k_\alpha+d_{s_i},k_\beta+d_{r_j},f_{s_i},f_{r_j},\circ)$
					\STATE add element $([p,\max\{k_p-\vtoid(w),0\}],[s,\max\{k_s-\vtoid(w),0\}])$ to $\gamma$
				\ENDIF
			\ENDFOR
		\ENDFOR
		\IF{$\mathtt{UniqTable}(e(w), \gamma)=\mathtt{nil}$}
			\STATE $\mathtt{UniqTable}(e(w), \gamma)\leftarrow \mathtt{CreateNewNode}(\gamma)$
		\ENDIF
		\RETURN $\mathtt{ConvTable}(\alpha,\beta,f_\alpha,f_\beta,e_\alpha,e_\beta,e(w),\circ)\leftarrow\mathtt{UniqTable}(e(\idtov(j)), \gamma)$
	\end{algorithmic}
	\normalsize
\end{algorithm}

\begin{algorithm}[tbp]
	\caption{$\mathtt{Expand}(\alpha,f,k,w)$, which expands VS-SDD $(\alpha,k)$ at the level of vtree node $w$ and returns the structure.}
	\label{alg:expand}
	\fontsize{9pt}{7pt}\selectfont
	\begin{algorithmic}[1]
	\IF{$k=\vtoid(w)$}
		\RETURN $\{([p_i,d_{p_i}],[s_i,d_{s_i}],\mathfalse,f_\alpha)\mid([p_i,d_{p_i}],[s_i,d_{s_i}])\in\alpha\}$
	\ELSIF[$w^r$ is the right child of $w$]{$k<\vtoid(w^r)$ \AND $k\neq 0$}
		\RETURN $\{([\alpha,0],[\top,0],f,\mathfalse),([\alpha,0],[\bot,0],\neg f,\mathfalse)\}$ \COMMENT{left expansion}
	\ELSE
		\RETURN $\{([\top,0],[\alpha,0],\mathfalse,f)\}$ \COMMENT{right expansion}
	\ENDIF
	\end{algorithmic}
	\normalsize
\end{algorithm}

\begin{algorithm}[tbp]
	\caption{$\mathtt{Consistent}(\alpha)$, which decides whether the Boolean function that the VS-SDD $\alpha$ represents is not $\mathfalse$.}
	\label{alg:consistent}
	\fontsize{9pt}{7pt}\selectfont
	\begin{algorithmic}[1]
		\REQUIRE VS-SDD $\alpha$.
		\ENSURE $\mathfalse$ if the Boolean function $\alpha$ represents is $\mathfalse$, or $\mathtrue$ otherwise.
		\IF{$\alpha=\bot$}
			\RETURN $\mathfalse$
		\ELSIF{$\alpha\in\{\top,\apos,\aneg\}$}
			\RETURN $\mathtrue$
		\ELSIF{$\mathtt{Cache}(\alpha)\neq\mathtt{nil}$}
			\RETURN $\mathtt{Cache}(\alpha)$
		\ELSE
			\FORALL{elements $([p_i,d_{p_i}],[s_i,d_{s_i}])$ in $\alpha$}
				\IF{$\mathtt{Consistent}(p_i)$ \AND $\mathtt{Consistent}(q_i)$}
					\RETURN $\mathtt{Cache}(\alpha)\leftarrow\mathtrue$
				\ENDIF
			\ENDFOR
			\RETURN $\mathtt{Cache}(\alpha)\leftarrow\mathfalse$
		\ENDIF
	\end{algorithmic}
	\normalsize
\end{algorithm}

Using Lemma \ref{lem:applycong}, the time complexity can be proved as follows.

\begin{proof}[Proof of Prop.~\ref{prop:apply} for trimmed VS-SDDs]
    Since the cost of computations involving $\top$ and $\bot$ is absorbed in other costs, we consider that among literals and decomposition nodes.
	For literals ($\apos$ and $\aneg$) or decomposition nodes $\lambda$, let $t_\lambda$ be the size of decomposition (here $t_{\mathbf{v}}=t_{\neg\mathbf{v}}=0$) and $T_\lambda$ be the number of incoming edges of $\lambda$ (here for the root VS-SDD node $r$, let $T_r=1$).
	Then we observe that $\sum_{\gamma\in\alpha}t_{\gamma}=|\alpha|$ and $\sum_{\gamma\in\alpha}T_{\gamma}=\order{|\alpha|}$, and as is the case with $\beta$.
	Now we analyze the total cost of all calls $\mathapply(\gamma,\delta,k_\gamma,k_\delta,\ldots)$ (other than the recursion) for fixed $\gamma\in\alpha$ and $\delta\in\beta$ (but varying $k_\gamma$ and $k_\delta$) when calling $\mathapply(\alpha,\beta,k_\alpha,k_\beta,\mathfalse,\mathfalse,\circ)$ at the top level.
	Let $w_\gamma=\idtov(k_\gamma)$ and $w_\delta=\idtov(k_\delta)$.
	Then there are multiple possibilities for $w_\gamma$ and $w_\delta$ as described above.
	However, from the identical vtree rule, the candidates of $w_\gamma$ are all equivalent up to the relation $\sim$, and as is the case with $w_\delta$.
	Now we divide the pair of the candidates of $(w_\gamma,w_\delta)$ into four cases.
	
	(i) $w_\gamma=w_\delta$.
	This corresponds to case (2).
	For this case, it takes $\order{t_{\gamma}t_{\delta}}$ time to compute if Line 13 is not executed.
	Note that if Line 13 is executed, the cost of this $\mathapply$ call is absorbed in that of the preceding $\mathapply$ call.
	Since in this case $w_{\gamma}=w_{\delta}=w$ and the candidates of $w_\gamma$ are equivalent up to $\sim$, the conditions (I)--(III) of Lemma \ref{lem:applycong} are satisfied among the candidates of $w_\gamma=w_\delta$, and thus in this case we need to proceed after Line 14 only once.
	Therefore, the total cost of this type of computation for fixed $\gamma$ and $\delta$ is bounded by $\order{t_\gamma t_\delta}$.
	
	(ii) $w_\gamma$ is a (proper) descendant of $w_\delta$.
	This corresponds to cases (3) and (4).
	For this case, it takes $\order{t_\delta}$ time per one call if Line 13 is not executed, since $\gamma$ is expanded to a decomposition of constant size.
	If $\gamma$ is the root node of $\alpha$, it is trivial that $\mathapply(\gamma,\delta,\ldots)$ is called only a constant number of times, since $k_\gamma=k_\alpha$, and among many candidates of $k_\delta$, the condition $w_\delta$ is an ancestor of $w_\gamma$ uniquely determines $k_\delta$.
	Otherwise, let $\lambda$ be one of the parent nodes of $\gamma$ (i.e.~nodes such that the decomposition has $[\gamma,d_{*}]$ as a prime or a sub).
	Now consider the case $\mathapply(\lambda,\cdot,k_\lambda,\cdot,\ldots)$ precedes $\mathapply(\gamma,\delta,k_\gamma,k_\delta,\ldots)$, i.e.~the situation the edge directed from $\lambda$ to $\gamma$ is traversed.
	Then we claim the following.
	\begin{claim}
		\label{clm:parent}
		$w_\lambda:=\idtov(k_\lambda)$ is a proper ancestor of $w_\delta$.
	\end{claim}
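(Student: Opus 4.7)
The plan is to trace how the recursive call $\mathapply(\gamma, \delta, k_\gamma, k_\delta, \ldots)$ is issued from $\mathapply(\lambda, \mu, k_\lambda, k_\mu, \ldots)$. Inspecting Alg.~\ref{alg:apply} together with $\mathtt{Expand}$ (Alg.~\ref{alg:expand}), the inner loop picks an element $([p_i, d_{p_i}], [s_i, d_{s_i}])$ of $\alpha' = \mathtt{Expand}(\lambda, \ldots, w^{\lambda\mu})$ with $\gamma \in \{p_i, s_i\}$, where $w^{\lambda\mu}$ denotes the LCA computed on Line 5. The two expansion branches of $\mathtt{Expand}$ produce only $\lambda$ itself or a constant as their primes and subs, so if $\gamma$ is a strict descendant of $\lambda$ in the DAG of $\alpha$, as the claim hypothesises, the edge $\lambda \to \gamma$ can only be traversed from the non-expanding branch. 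This forces $k_\lambda = \vtoid(w^{\lambda\mu})$, so $w_\lambda = w^{\lambda\mu}$ and consequently $\idtov(k_\mu)$ is a descendant of $w_\lambda$.

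I would then split on whether $\mu$ is expanded in this enclosing call. If $\mu$ is not expanded, $\idtov(k_\mu) = w^{\lambda\mu} = w_\lambda$ and $\delta$ is a prime or sub of $\mu$, so by the VS-SDD definition $w_\delta$ is a proper left or right descendant of $w_\lambda$. If $\mu$ is left- or right-expanded, the trigger for expansion already forces $\idtov(k_\mu)$ to be a proper descendant of $w^{\lambda\mu} = w_\lambda$; the elements of $\mu'$ then contain only $\mu$ with offset $0$ or a constant, so either $\delta = \mu$ with $w_\delta = \idtov(k_\mu)$ a proper descendant of $w_\lambda$, or $\delta$ is a constant, which is dispatched at the top of $\mathapply$ without further recursion and hence contributes nothing to the cost bound being proved. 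In every sub-case $w_\delta$ is a proper descendant of $w_\lambda$, i.e.~$w_\lambda$ is a proper ancestor of $w_\delta$.

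The most delicate point will be formalising precisely which call is meant by ``$\mathapply(\lambda, \cdot, k_\lambda, \cdot, \ldots)$ precedes $\mathapply(\gamma, \delta, \ldots)$''. When $\lambda$ is itself expanded in some outer call, the algorithm recurses into $\lambda$ again through elements like $([\lambda, 0], [\top, 0])$ with $k_\lambda$ unchanged, so in general a chain of self-calls on $\lambda$ can intervene before an actual edge traversal $\lambda \to \gamma$ occurs. I would resolve this by taking the ``preceding'' call in the claim to be the innermost ancestor in the recursion tree that directly performs the $\lambda \to \gamma$ traversal; since $k_\lambda$ and $\lambda$ are preserved along any such expansion chain, the case analysis above applies to that innermost call without loss of generality, finishing the proof.
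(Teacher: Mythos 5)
Your proof is correct, and it reaches the claim by a direct route where the paper argues by elimination. The paper supposes $w_\lambda$ is a (not necessarily proper) descendant of $w_\delta$, deduces that the preceding call would then have to be $\mathapply(\lambda,\delta,k_\lambda,k_\delta,\ldots)$ with $\delta$ unexpanded, so that $\delta$'s own decomposition is consumed and $\delta$ itself can never reappear as the second argument of a deeper call; it then disposes of the incomparable case by invoking the case-(ii) hypothesis that $w_\gamma$ is a descendant of $w_\delta$. You instead observe that traversing the edge $\lambda\to\gamma$ forces $\lambda$ into the non-expanding branch of $\mathtt{Expand}$, hence $w_\lambda$ equals the LCA of that call, and that every admissible partner $\delta$ --- a prime or sub of an unexpanded $\mu$, or an expanded $\mu$ itself --- respects a proper descendant of that LCA. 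Both proofs rest on the same mechanism, but yours is marginally stronger and cleaner: it does not use the case-(ii) hypothesis at all (only that $\delta$ is not a constant), and it makes explicit the step the paper compresses, namely why the partner in the preceding call is constrained by the expansion rules. Your final paragraph on expansion self-calls preserving $(\lambda,k_\lambda)$ is a legitimate clarification of what ``preceding call'' means that the paper glosses over. One immaterial imprecision: a constant $\delta$ paired with a decomposition $\gamma$ is not dispatched at the top of $\mathapply$ (lines 8--11 require both arguments to be terminal); rather $\delta$ is right-expanded and the recursion continues, with its cost absorbed elsewhere as the paper notes --- but since the claim is only ever applied with non-constant $\delta$, this does not affect your argument.
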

	\begin{claimproof}
		Suppose $w_\lambda$ is a (not necessarily proper) descendant of $w_\delta$.
		Then, $\mathapply(\lambda,\delta,k_\lambda,k_\delta,\ldots)$ should be called before $\mathapply(\gamma,\ldots)$, and since the decomposition of $\delta$ is processed in this call (i.e.~$\delta$ is not expanded), $\mathapply(\gamma,\delta,k_\gamma,k_\delta,\ldots)$ is not called.
		Suppose $w_\lambda$ is neither an ancestor nor a descendant of $w_\delta$.
		Then, since $\idtov(k_\gamma)$ is a descendant of $w_\lambda$, it is also neither an ancestor nor a descendant of $w_\delta$, which contradicts the assumption.
	\end{claimproof}
	
	For all candidates of $w_\lambda$, the relative position of $w_\gamma$ compared to $w_\lambda$ (i.e.~$k_\gamma-k_\lambda$) is always equal to the difference $d_{p_i}$ or $d_{s_i}$.
	Moreover, since all candidates of $w_\lambda$ are equivalent up to $\sim$, and $w_\delta$ is a descendant of $w_\lambda$ and an ancestor of $w_\gamma$, the relative position of $w_\delta$ compared to $w_\lambda$ (i.e.~$k_\delta-k_\lambda$) is also always equal.
	Thus $k_\gamma-k_\delta=k_\gamma-\vtoid(w)$ is always equal, which satisfies condition (I) of Lemma \ref{lem:applycong}.
	Therefore, we need to proceed after Line 14 only once given that the call $\mathapply(\lambda,\cdot,k_\lambda,\cdot,\ldots)$ precedes.
	Since $\gamma$ has $T_\gamma$ parents, the total cost of this type of computation for fixed $\gamma$ and $\delta$ is bounded by $\order{T_\gamma t_\delta}$.
	
	(iii) $w_\delta$ is a descendant of $w_\gamma$.
	By reversing the argument of (ii), the total cost of this type of computation for fixed $\gamma$ and $\delta$ is bounded by $\order{t_\gamma T_\delta}$.
	
	(iv) $w_\gamma$ and $w_\delta$ have no ancestor-descendant relation.
	This corresponds to case (5).
	Even if Line 13 is not executed, it takes only constant time (other than the recursion) since both $\gamma$ and $\delta$ are expanded to constant size decompositions.
	If $\gamma$ and $\delta$ are root nodes of $\alpha$ and $\beta$, respectively, $\mathapply(\gamma,\delta,\ldots)$ is called only once, since $k_\gamma=k_\alpha$ and $k_\delta=k_\beta$.
	Otherwise, there must be a preceding $\mathapply$ call.
	The preceding $\mathapply$ call does not fall into case (iv), because once case (5) occurs, successive computations must involve constants.
	Thus, the cost of this $\mathapply$ call is absorbed in that of the preceding $\mathapply$ call.
	
	Now the total cost of $\mathapply(\alpha,\beta,k_\alpha,k_\beta,\mathfalse,\mathfalse,\circ)$ is $\sum_{\gamma\in\alpha}\sum_{\delta\in\beta}\order{t_\gamma t_\delta+T_\gamma t_\delta+t_\gamma T_\delta}=\order{|\alpha||\beta|}$, which proves Prop.~\ref{prop:apply}.
\end{proof}
\end{document}